\newtheorem*{theorem*}{Theorem}
\newtheorem{theorem}{Theorem}
\newtheorem*{corollary*}{Corollary}
\newtheorem{corollary}{Corollary}
\newtheorem*{lemma*}{Lemma}
\newtheorem{lemma}{Lemma}
\newtheorem*{definition*}{Definition}
\newtheorem{definition}{Definition}
\newcommand{\etc}{\emph{etc.}\xspace}
\newcommand{\eg}{\emph{e.g.}\xspace}
\newcommand{\ie}{\emph{i.e.}\xspace}
\newcommand{\etal}{\emph{et al.}\xspace}
\newcommand{\ro}{\textnormal{RO}\xspace}
\newcommand{\rz}{\textnormal{RZ}\xspace}
\newcommand{\ru}{\textnormal{RU}\xspace}
\newcommand{\rd}{\textnormal{RD}\xspace}
\newcommand{\rn}{\textnormal{RN}\xspace}
\newcommand{\rna}{\textnormal{RNA}}
\newcommand{\rne}{\textnormal{RNE}}
\newcommand{\faith}{\textnormal{FR}\xspace}
\newcommand{\rnd}{\circ\xspace}
\newcommand{\ufp}{\textnormal{ufp}}
\newcommand{\ulp}{\textnormal{ulp}}
\newcommand{\fpred}{\textnormal{pred}}
\newcommand{\fsucc}{\textnormal{succ}}
\renewcommand\footnotetextcopyrightpermission[1]{}
\begin{document}

\title{Odd but Error-Free FastTwoSum: More General Conditions for FastTwoSum as an Error-Free Transformation for Faithful Rounding Modes}
\subtitle{Rutgers Department of Computer Science Technical Report DCS-TR-761}

\author{Sehyeok Park}
\affiliation{%
  \institution{Rutgers University}
  \city{New Brunswick}
  \state{New Jersey}
  \country{USA}}
\email{sp2044@cs.rutgers.edu}

\author{Jay P. Lim}
\affiliation{%
  \institution{University of California, Riverside}
  \city{Riverside}
  \state{California}
  \country{USA}}
\email{jlim@ucr.edu}

\author{Santosh Nagarakatte}
\affiliation{%
  \institution{Rutgers University}
  \city{New Brunswick}
  \state{New Jersey}
  \country{USA}}
\email{santosh.nagarakatte@cs.rutgers.edu}

\begin{abstract}
  This paper proposes sufficient, yet more general conditions for
  applying FastTwoSum as an error-free transformation (EFT) under all
  faithful rounding modes. Additionally, it also identifies guarantees
  tailored to round-to-odd for establishing FastTwoSum as an EFT.
  This paper also describes a floating-point splitting tailored for
  round-to-odd that is an EFT where the distribution of bits is
  configurable (\ie, ExtractScalar for round-to-odd).  Our sufficient
  conditions are more general than those previously known in the
  literature (\ie, it applies to a wider operand domain).
\end{abstract}

\keywords{FastTwoSum, EFT, round-to-odd}

\maketitle

\section{Introduction}

Under select circumstances, the rounding error of a finite-precision,
floating-point addition is itself a floating-point number. FastTwoSum
is the most prominent method for computing floating-point rounding
errors using floating-point operations. Given a floating-point set
$\mathbb{F}$ and a rounding operation $\circ$ mapping $\mathbb{R}$ to
$\mathbb{F}$, FastTwoSum takes inputs $a, b \in \mathbb{F}$ and
produces outputs $x, y \in \mathbb{F}$ as shown below.

\begin{empheq}[box=\fbox]{align*}
\textbf{Fa}&\textbf{stTwoSum}(a, b):\\
  &x = \circ(a+b)\\
  &z = \circ(x-a)\\
  &y = \circ(b-z)\\
  &\textbf{return } x, y
\end{empheq}

FastTwoSum is the foundation for various algorithms designed to
improve the accuracy of finite-precision addition. Through the
operations shown above, FastTwoSum produces $x$, the floating-point
sum of its inputs using the rounding rule imposed by $\circ$. The
second output $y$ is an estimate of $\delta = a+b - x$: the rounding
error incurred in computing $x$. Algorithms based on FastTwoSum
utilize this error estimate to either compensate the initial
floating-point sum or extend the
precision past what is available in
$\mathbb{F}$~\cite{dekker:fts:1971}~\cite{priest:arbitrary-prec:arith:1991}
~\cite{shewchuk:adaptive-prec:1997}~\cite{hida:qd:2001}. 
Compensation and extended-precision algorithms can 
maximize the accuracy of their results by applying FastTwoSum as an 
\textit{error-free transformation (EFT)}, through which the original inputs 
are transformed to an equivalent pair of outputs (\ie, $x+y = a+b$).

Dekker's analysis of FastTwoSum~\cite{dekker:fts:1971} demonstrates
that $y$ is exactly equal to the rounding error $\delta$ given the
following conditions: $\mathbb{F}$'s base is $\beta \le 3$, $|a| \ge
|b|$, and $\circ$ applies round-to-nearest ($\rn$). The equivalence
between $y$ and $\delta = a+b - x$ implies $x+y = a+b$, which ensures
FastTwoSum is an EFT.

Later analyses of FastTwoSum have explored EFT guarantees beyond the
setting proposed by Dekker. In particular, several works have analyzed
FastTwoSum's exactness and rounding properties in the absence of
$\rn$~\cite{linnainmaa:fp-analysis:1974}~\cite{demmel:reprod-sum:arith:2013}
~\cite{graillat:num-val:2015}~\cite{boldo:fts-robustness:toms:2018}~\cite{jeannerod:fts-revisited:arith:2025}. The
rounding error induced by floating-point addition under $\rn$ is
guaranteed to be a floating-point number barring any
overflow~\cite{bohlender:exact-fp-ops:arith:1991}~\cite{daumas:exact-fp-computing:tphol:2001}~\cite{boldo:correcting-terms:arith:2003}. However,
the guarantee does not extend to other faithful rounding modes without
the \rn properties~\cite{boldo:correcting-terms:arith:2003}~\cite{muller:fp-handbook:2018}.

The IEEE-754 standard~\cite{ieee754:2019} supports multiple rounding
modes: round-to-nearest (ties-to-even ($\rne$) or ties-to-away
($\rna$)), round-down ($\rd$), round-up ($\ru$), and round-toward-zero
($\rz$). The latter three, which are collectively referred to as
directed rounding modes, do not guarantee that numbers are rounded to
their closest values in $\mathbb{F}$. While IEEE-754 enforces $\rne$
as the default rounding mode, the directed rounding modes have
multiple use cases including interval arithmetic. Due to the high cost
of switching rounding modes on modern machines and the emergence of
custom floating-point units without $\rne$ support, generalizing EFT
guarantees for FastTwoSum to other rounding modes is of great
interest.

Round-to-odd ($\ro$) is a faithful rounding mode that will be
supported by the upcoming P3109 standard for machine learning
arithmetic formats. Akin to $\rn$, $\ro$ is an unbiased rounding mode
for which the probabilistic mean value of the error for a single
instance of rounding is zero. Unlike $\rn$, $\ro$ makes double
rounding innocuous. Rounding to some intermediate, higher precision
format via $\ro$ prior to the final rounding produces the same result
as directly rounding to the destination
format~\cite{boldo:ro:2005}~\cite{russinoff:verif-fp-hardware:2019}. 
An interesting application of $\ro$ is the development of multi-precision math
libraries~\cite{lim:rlibm-all:popl:2022}~\cite{park:rlibm-multi:2025}. 
Given the anticipated increase in $\ro$ adoption due to upcoming standards, 
establishing EFT guarantees for FastTwoSum under $\ro$ will be of practical use 
in the design of future algorithms.

FastTwoSum is an EFT when the operation $z = \circ(x - a)$ is exact
and the rounding error $\delta = a + b - x$ is representable in
floating-point. This paper establishes that both properties are
guaranteed under all faithful rounding modes if (i) $a$ is
an integer multiple of $\ulp(b)$ and (ii) $b$ is an
integer multiple of $2u^{2} \cdot \ufp(a)$, where $u$, $\ufp(\cdot)$, 
and $\ulp(\cdot)$ denote the unit round-off, unit in the first place, and unit 
in the last place, respectively. Given $p$-bits of available precision, these 
conditions can potentially tolerate up to a $2p-1$ exponent difference between 
a and b. For operands that satisfy our conditions, the permitted exponent 
difference nearly doubles that of previously known EFT conditions for 
faithfully rounded FastTwoSum, making our conditions applicable to a wider 
input domain. In addition to these conditions, which suffice for all faithful 
rounding modes including $\ro$, we present new EFT guarantees specifically 
tailored to $\ro$. We apply these conditions to the
ExtractScalar algorithm - a FastTwoSum-based EFT for $\rn$ that splits
a single floating-point input across two numbers while maintaining the
original value~\cite{rump:accsum-1:siam:2008}. Specifically, we design a new 
variant of ExtractScalar
that preserves its original properties under $\ro$.

\section{Background}\label{sec:background}

\subsection{Definitions}

We denote by $\mathbb{F}$ the set of floating-point numbers of base
$\beta = 2$, precision $p \ge 2$, and extremal exponents $E_{min}$ and
$E_{max}$. We additionally include infinities in $\mathbb{F}$. 
For all reals $x \in \mathbb{F}$, $x = M_{x} \times 2^{E_{x}-p+1}$ 
for a unique pair of $M_{x}, E_{x} \in \mathbb{Z}$ that meet the following 
conditions.

\begin{center}
\begin{minipage}{.3\linewidth}
\begin{align*}
  E_{min} \le E_{x} \le E_{max}
\end{align*}
\end{minipage}
\begin{minipage}{.3\linewidth}
\begin{align*}
  |M_{x}| < 2^{p}
\end{align*}
\end{minipage}
\begin{minipage}{.3\linewidth}
\begin{align*}
  E_{x} > E_{min} \Rightarrow |M_x| \ge 2^{p-1}
\end{align*}
\end{minipage}
\end{center}

A nonzero real $x \in \mathbb{F}$ is normal if $|x| \ge 2^{E_{min}}$ 
and subnormal otherwise. We denote the largest finite number
in $\mathbb{F}$ by $\Omega = (2^{p} - 1) \times 2^{E_{max} - p +
  1}$. Similarly, $\omega = 2^{E_{min}-p+1}$ represents the smallest
nonzero magnitude.

We express through $\circ \in \rn$ that $\circ$ applies $\rn$ with any
arbitrary tie breaking rule (\eg, ties-to-nearest, ties-to-away). We
apply the notation $\circ \in \faith$
from~\cite{jeannerod:fts-revisited:arith:2025} to indicate that
$\circ$ performs faithful rounding (\ie, $\forall r \in \mathbb{R},
\circ(r) \in \{\rd(r), \ru(r)\}$). We use $\circ(a \pm b)$ to denote
floating-point addition or subtraction between $a, b \in \mathbb{F}$
under the rounding mode specified by $\circ$. We denote by $u =
2^{-p}$ the unit round-off, which is the distance between 1 and its
closest number in $\mathbb{F}$. For a given $r \in \mathbb{R}$, we
define its exponent $e_{r}$, unit in the first place ($\ufp(r)$), and
unit in the last place ($\ulp(r)$) as shown below.

\begin{definition}
  \label{def:ufp-ulp}
For $r \in \mathbb{R}$, 

\begin{center}
\begin{minipage}{.3\linewidth}
\begin{align*}
	e_r = \begin{cases}
		-\infty, &\text{if} \ r = 0\\
		\lfloor \log_{2}(|r|) \rfloor &\text{otherwise}
	\end{cases}
\end{align*}
\end{minipage}
\begin{minipage}{.4\linewidth}
\begin{align*}
	\ufp(r) = 2^{e_r}	
\end{align*}
\end{minipage}
\begin{minipage}{.9\linewidth}
\begin{align*}
	\ulp(r) = 
	\begin{cases}
		2u \cdot \ufp(r), &\text{if} \ |r| \ge 2^{E_{min}}\\
		\omega &\text{otherwise}
	\end{cases}
\end{align*}
\end{minipage}
\end{center}
\end{definition}

Definition~\ref{def:ufp-ulp} ensures for all finite $x \in \mathbb{F}$
that $\ulp(x) \ge 2u \cdot \ufp(x)$. Moreover, for all finite $x, y
\in \mathbb{F}$, $\ufp(x) \ge \ufp(y)$ implies $\ulp(x) \ge \ulp(y)$.

We denote by $\fpred(r)$ the \textit{floating-point predecessor} of $r
\in \mathbb{R}$. Conversely, $\fsucc(r)$ represents the
\textit{floating-point successor}.

\begin{definition}
  \label{def:pred-succ-real}
For $r \in \mathbb{R}$ and $r\notin \mathbb{F}$,
	\begin{align*}
		&\fpred(r) = \max\{x \in \mathbb{F} \ \vert \ x < r \}\\
		&\fsucc(r) = \min\{x \in \mathbb{F} \ \vert \ x > r \}
	\end{align*}
\end{definition}\label{def:pred-succ}

We collectively refer to $\fpred(r)$ and $\fsucc(r)$ as $r$'s floating-point neighbors. For finite numbers in $\mathbb{F}$, we define their floating-point neighbors as follows.

\begin{definition}\label{def:pred-succ-fp}
For finite $x \in \mathbb{F}$,
\begin{align*}
&\fpred(x) = 
\begin{cases}
  -\omega, &\text{if} \ x = 0\\
	-\infty, &\text{if} \ x = -\Omega\\
	x - \frac{1}{2}\ulp(x), &\text{if} \ x = \ufp(x) \ \text{and} \ 
	x > 2^{E_{min}}\\
	x - \ulp(x) &\text{otherwise}
\end{cases}\\
&\fsucc(x) = 
\begin{cases}
  \omega, &\text{if} \ x = 0\\
	\infty, &\text{if} \ x = \Omega\\
	x + \frac{1}{2}\ulp(x), &\text{if} \ x = -\ufp(x) \ \text{and} \ x < -2^{E_{min}}\\
	x + \ulp(x) &\text{otherwise}
\end{cases}
\end{align*}
\end{definition}

\subsection{Basic Properties of Floating-Point}

We present the basic floating-point properties most relevant to our theorems.
Henceforth, the notation $x \in y\mathbb{Z}$ for $x, y \in \mathbb{R}$ expresses that $x$ is an integer multiple of $y$.  Every finite, nonzero $x \in \mathbb{F}$ satisfies $\ufp(x) \le |x| < 2\ufp(x)$. Every finite $x \in \mathbb{F}$ also satisfies $x \in \omega\mathbb{Z}$ and $x \in \ulp(x)\mathbb{Z}$. If $x \in 2^{k}\mathbb{Z}$ for $k \in \mathbb{Z}$, then $x \in 2^{i}\mathbb{Z}$ for any $i \in \mathbb{Z}$ less than $k$. Because $2u \cdot \ufp(x)$ is an integer power of 2 and Definition~\ref{def:ufp-ulp} guarantees $\ulp(x) \ge 2u \cdot \ufp(x)$, it follows that $x \in 2u \cdot \ufp(x)\mathbb{Z}$. By extension, $x$ is an integer multiple of any smaller integer power of 2 (\eg, $u \cdot \ufp(x)$, $2u^{2} \cdot \ufp(x)$, \etc).

We assume throughout the remainder of the paper that the operands of FastTwoSum are finite (\ie, $|a|, |b| \le \Omega$). All finite $a, b \in \mathbb{F}$ satisfy $a, b \in \min(\ulp(a), \ulp(b))\mathbb{Z}$. Because addition and subtraction preserve common factors, the real sum $a+b$ and its floating-point counterpart $\circ(a+b)$ satisfy $a+b, \circ(a+b) \in \min(\ulp(a), \ulp(b))\mathbb{Z}$ for all $\circ \in \faith$. Consequently, the resulting rounding error $\delta = a+b - \circ(a+b)$ adheres to $\delta \in \min(\ulp(a), \ulp(b))\mathbb{Z}$. Because $a, b \in \omega \mathbb{Z}$, it also follows that $a+b, \circ(a+b), \delta \in \omega \mathbb{Z}$. This property implies neither the sum of numbers in $\mathbb{F}$ nor the associated rounding error is subject to underflow, meaning these values are integer multiples of $\omega$.

A number $r \in \mathbb{R}$ is in $\mathbb{F}$ if there exist $M_{r}, E_{r} \in 
\mathbb{Z}$ such that $r = M_{r} \times 2^{E_r - p + 1}$, $|M_{r}| < 2^{p}$, 
and $E_{min} \le E_{r} \le E_{max}$. We provide the conditions sufficient to 
meet these constraints in the following lemma.

\begin{lemma}\label{lemma:fp}
Let $r \in \mathbb{R}$, $k \in \mathbb{Z}$, and $\sigma = 2^{k}$. If $|r| \le 
\min(\sigma, \Omega)$ and $r \in \max(u \cdot \sigma, \omega)\mathbb{Z}$, then 
$r \in \mathbb{F}$.
\end{lemma}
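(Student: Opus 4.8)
The plan is to exhibit explicit integers $M_r$ and $E_r$ witnessing membership of $r$ in $\mathbb{F}$, splitting into the trivial case $r = 0$ and the main case $r \ne 0$. For $r = 0$ we simply take $M_r = 0$ and $E_r = E_{min}$. So assume $r \ne 0$; then $\ufp(r) = 2^{e_r}$ with $e_r = \lfloor \log_2|r|\rfloor$, and $\ufp(r) \le |r| < 2\ufp(r)$.

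First I would pin down the scale factor. Write $\tau = \mathrm{max}(u\cdot\sigma, \omega)$, so the hypothesis says $r \in \tau\mathbb{Z}$, i.e. $r = N \cdot \tau$ for some nonzero $N \in \mathbb{Z}$. I would split on which term achieves the maximum. \emph{Case A: $u\cdot\sigma \ge \omega$}, so $\tau = u\cdot\sigma = 2^{k-p}$. Since $|r| \le \mathrm{min}(\sigma,\Omega) \le \sigma = 2^{k}$, we get $|N| = |r|/\tau \le 2^{k}/2^{k-p} = 2^{p}$; I need the strict inequality $|N| < 2^{p}$, which I would obtain by arguing the boundary value $|r| = 2^{k}$ cannot combine with the constraints — actually it can ($2^k$ may be representable), so instead I set $E_r = e_r$ and $M_r = r/2^{e_r - p + 1}$ and check this is an integer and bounded. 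From $\ufp(r) = 2^{e_r} \le |r| < 2^{e_r+1}$ we get $2^{p-1} \le |M_r| < 2^{p}$, giving the strict bound and the normalization condition automatically. That $M_r \in \mathbb{Z}$ follows because $r \in \tau\mathbb{Z}$ with $\tau = u\sigma = 2^{k-p}$ and $2^{k-p} \le 2^{e_r - p + 1}$: the latter holds iff $k \le e_r + 1$, which follows from $|r| \le \sigma = 2^k$ hence $e_r \le k$, so actually $2^{k-p}$ divides $2^{e_r-p+1}$ only when $k-p \le e_r-p+1$, i.e. $k \le e_r+1$ — true. Hence $r$, being a multiple of $2^{k-p}$, is a multiple of the larger quantity $2^{e_r-p+1}$ only if that quantity divides... wait, I need the reverse: $r$ is a multiple of $2^{e_r-p+1}$ because $2^{e_r-p+1}$ is a multiple of $2^{k-p}$ when $e_r-p+1 \ge k-p$, i.e. $e_r \ge k-1$. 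Combined with $e_r \le k$ this leaves $e_r \in \{k-1, k\}$, and in both cases $2^{e_r - p + 1} \in 2^{k-p}\mathbb{Z}$, so $r \in 2^{e_r-p+1}\mathbb{Z}$, making $M_r$ an integer. Finally $E_{min} \le e_r$: this needs $\omega \le u\sigma$ — precisely the Case A hypothesis — since $|r| \ge \tau = u\sigma \ge \omega = 2^{E_{min}-p+1}$ forces... hmm, more directly $|r| \ge u\sigma$ is not immediate; rather $|r| \ge \tau$ because $r$ is a nonzero multiple of $\tau$, so $|r| \ge u\sigma$, giving $e_r \ge k - p$; but I need $e_r \ge E_{min}$, so I would instead observe that in Case A either $e_r \ge E_{min}$ directly, or if $e_r < E_{min}$ then $|r| < 2^{E_{min}}$ and I fall back to the subnormal representation $E_r = E_{min}$, $M_r = r/\omega$, which is an integer since $r \in \omega\mathbb{Z}$ (as $\tau \ge \omega$) and satisfies $|M_r| = |r|/\omega < 2^{E_{min}}/\omega = 2^{p-1} < 2^p$. \emph{Case B: $\omega > u\sigma$}, so $\tau = \omega$ and $\sigma < 2^{E_{min}}$; then $|r| \le \sigma < 2^{E_{min}}$, and I use $E_r = E_{min}$, $M_r = r/\omega \in \mathbb{Z}$, with $|M_r| < 2^{E_{min}}/2^{E_{min}-p+1} = 2^{p-1} < 2^p$ and no normalization requirement since $E_r = E_{min}$.

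The main obstacle, and the place I would spend the most care, is the boundary bookkeeping between the normal representation ($E_r = e_r$, using $\ufp$) and the subnormal fallback ($E_r = E_{min}$): I must verify that in every sub-case the chosen $M_r$ is genuinely an integer (this is where $r \in \tau\mathbb{Z}$ together with $\tau$ being an integer power of two no larger than $2^{e_r-p+1}$ or $\omega$ gets used) and that $|M_r| < 2^p$ strictly, with the $\ge 2^{p-1}$ normalization holding exactly when $E_r > E_{min}$. A clean way to organize this is: if $|r| \ge 2^{E_{min}}$ take $E_r = e_r$ and note $e_r - p + 1 \ge E_{min} - p + 1$, so $\ulp$-scale $2u\cdot\ufp(r) = 2^{e_r - p + 2}$ — no, cleaner to directly show $\tau \mid 2^{e_r-p+1}$ as above — whereas if $|r| < 2^{E_{min}}$ take $E_r = E_{min}$ and use $r \in \omega\mathbb{Z}$ (which holds because $\tau \ge \omega$ always, since $\tau = \mathrm{max}(u\sigma,\omega)$). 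Modulo these power-of-two divisibility checks, which are the technical heart, the lemma follows by directly reading off the definition of $\mathbb{F}$.
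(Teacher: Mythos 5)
The paper states this lemma without proof, treating it as a basic representability fact, so there is no in-paper argument to compare against; I can only judge your proposal on its own terms. Your overall strategy --- exhibit $M_r$ and $E_r$ explicitly, using the normalized representation $E_r = e_r$ when $|r| \ge 2^{E_{min}}$ and the subnormal representation $E_r = E_{min}$, $M_r = r/\omega$ otherwise --- is the natural one, and your Case B and the subnormal fallback in Case A are correct. One small omission throughout: you never verify $E_r \le E_{max}$, which is precisely where the hypothesis $|r| \le \Omega$ is needed (via $\Omega < 2^{E_{max}+1}$, hence $e_r \le E_{max}$).

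The genuine problem is the integrality argument for $M_r$ in Case A: you run the divisibility implication backwards. From $r \in 2^{k-p}\mathbb{Z}$ you may conclude $r \in 2^{e_r-p+1}\mathbb{Z}$ exactly when $2^{k-p}$ is itself an integer multiple of $2^{e_r-p+1}$, that is, when $k-p \ge e_r-p+1$, equivalently $e_r \le k-1$ --- not $e_r \ge k-1$ as you write. Consequently the claim that ``this leaves $e_r \in \{k-1,k\}$'' is false (take $r = u\sigma$, for which $e_r = k-p$), and the sub-case $e_r = k$ is not actually covered by divisibility: $r \in 2^{k-p}\mathbb{Z}$ does not imply $r \in 2^{k-p+1}\mathbb{Z}$ in general. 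The correct split is: if $e_r \le k-1$, divisibility gives $M_r \in \mathbb{Z}$, and $2^{e_r} \le |r| < 2^{e_r+1}$ gives $2^{p-1} \le |M_r| < 2^{p}$ as you computed; if $e_r = k$, then $2^{e_r} \le |r| \le 2^{k}$ forces $|r| = 2^{k}$, so $M_r = \pm 2^{p-1}$ directly. With that repair (and the $E_{max}$ check) your proof is complete; the error is local and fixable, but as written the key integrality step does not follow.
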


If $a, b \in \mathbb{F} \setminus \{\pm \infty\}$, then $\delta = a + b - 
\circ(a+b) \in \omega\mathbb{Z}$ for all $\circ \in \faith$. Therefore, $\delta 
\in u \cdot 2^{k} \mathbb{Z}$ for $k \in \mathbb{Z}$ implies $\delta \in \max(u 
\cdot 2^{k}, \omega)\mathbb{Z}$.

\begin{corollary}\label{corollary:fp}
Let $a, b \in \mathbb{F} \setminus \{ \pm \infty \}$ and $k \in \mathbb{Z}$. 
Let $\delta = a+b - \circ(a+b)$ and $\sigma = 2^{k}$. If $|\delta| \le \min(\sigma, 
\Omega)$ and $\delta \in u \cdot \sigma\mathbb{Z}$, then $\delta \in \mathbb{F}$.
\end{corollary}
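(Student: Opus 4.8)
The plan is to obtain Corollary~\ref{corollary:fp} as a direct instance of Lemma~\ref{lemma:fp} applied to the real number $r = \delta$. Lemma~\ref{lemma:fp} asks for two ingredients: a magnitude bound $|r| \le \text{min}(\sigma, \Omega)$ and a divisibility condition $r \in \text{max}(u \cdot \sigma, \omega)\mathbb{Z}$. The magnitude bound is handed to us verbatim by the corollary's hypothesis $|\delta| \le \text{min}(\sigma,\Omega)$, so the only real work is to upgrade the assumed $\delta \in u \cdot \sigma\mathbb{Z}$ to the combined divisibility statement $\delta \in \text{max}(u\cdot\sigma,\omega)\mathbb{Z}$.

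First I would invoke the background fact, established just before the corollary, that for $a, b \in \mathbb{F}$ and any $\circ \in \faith$ the rounding error $\delta = a+b-\circ(a+b)$ lies in $\omega\mathbb{Z}$ — this followed from $a, b \in \omega\mathbb{Z}$ together with the fact that addition, subtraction, and faithful rounding preserve the common factor $\omega$, and it is exactly the statement that neither the exact sum nor its rounding error underflows. Combined with the corollary's hypothesis $\delta \in u\cdot\sigma\mathbb{Z}$, this tells us $\delta$ is simultaneously an integer multiple of $u\cdot\sigma = 2^{k-p}$ and of $\omega = 2^{E_{min}-p+1}$. Since both are integer powers of two, one divides the other, so $\text{max}(u\cdot\sigma,\omega)$ is their least common multiple and any number divisible by both is divisible by the larger; hence $\delta \in \text{max}(u\cdot\sigma,\omega)\mathbb{Z}$. (This is precisely the implication quoted in the paragraph preceding the corollary.)

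Finally I would apply Lemma~\ref{lemma:fp} with $r = \delta$ and the same $k$: both of its hypotheses now hold, so it yields $\delta \in \mathbb{F}$, which is the claim. I do not expect a genuine obstacle here — the corollary is essentially a repackaging of Lemma~\ref{lemma:fp} specialized to rounding errors, and the one substantive ingredient is the $\delta \in \omega\mathbb{Z}$ property, whose role is to rule out interference from the subnormal range so that the two separate divisibility constraints collapse into the single constraint that Lemma~\ref{lemma:fp} expects.
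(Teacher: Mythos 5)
Your proposal is correct and follows exactly the route the paper intends: the paragraph preceding the corollary supplies the fact that $\delta \in \omega\mathbb{Z}$, which together with $\delta \in u\cdot\sigma\mathbb{Z}$ yields $\delta \in \text{max}(u\cdot\sigma,\omega)\mathbb{Z}$, and Lemma~\ref{lemma:fp} then applies directly with $r = \delta$. There is nothing to add.
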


Lastly, we summarize the key properties of rounding. For all $r \in \mathbb{R}$ such that $|r| \le \Omega$, faithful rounding guarantees $|\rnd(r) - r| \le \frac{1}{2}\ulp(r)$ when $\circ \in \rn$ and $|\circ\!(r) - r| < \ulp(r)$ otherwise. Faithful rounding also enforces sign preservation: $\circ(r)\cdot r  \ge 0$ for all $\circ \in \faith$ and $r \in \mathbb{R}$. 

\subsection{Properties of Round-to-Odd}

For base $\beta = 2$, round-to-odd (\ro) is a faithful rounding mode that maps 
all numbers $r \not\in \mathbb{F}$ to a floating-point neighbor $x \in \{\fpred(r), \fsucc(r)\}$ such that $x$'s 
binary encoding forms an \textit{odd integer}. For all numbers $r \in 
\mathbb{R}$ and $\mathbb{F}$ such that $p \ge 2$, we define $\ro(r)$ as 
follows. 

\begin{definition}\label{def:rto}
For $r \in \mathbb{R}$ and $\mathbb{F}$ such that $p \ge 2$, 

\begin{align*}
\ro(r) = 
\begin{cases}
	r, &\text{if} \ r \in \mathbb{F}\\
  \fsucc(r), &\text{if} \ r \not \in \mathbb{F} \ \text{and} \ \fsucc(r) 
  \ \text{has odd encoding}\\
  \fpred(r) &\text{if} \ r \not \in \mathbb{F} \ \text{and} \ \fpred(r) 
  \ \text{has odd encoding}
\end{cases}
\end{align*}
\end{definition}

Due to Definition~\ref{def:rto}, $x = \ro(r) = M_{x} \times 2^{E_{x}-p+1}$ has an \textit{even significand} (\ie, $M_{x}$ is an even integer) only if $r \in \mathbb{F}$. If $p \ge 2$, all finite $x \in \mathbb{F}$ such that $|x| = \ufp(x) > \omega$ (\ie, $|x|$ is an integer power of 2 greater than $\omega$) have even significands. Therefore, \ro does not renormalize within the dynamic range: for all $r \in \mathbb{R}$ such that $\omega \le |r| \le \Omega$, $x = \ro(r) \in \mathbb{F}$ satisfies $e_{x} = e_{r}$. Definition~\ref{def:rto} also implies $\ro \not\in \rn$; hence, $|\ro(r) - r| < \ulp(r)$ for all reals $r$ such that $|r| \le \Omega$.

\subsection{Exactness Conditions for FastTwoSum}

Given finite inputs $a, b \in \mathbb{F}$ such that $|a+b|\le \Omega$ and the 
operations $x = \circ_{1}(a+b)$, $z = \circ_{2}(x - a)$, and 
$y = \circ_{3}(b - z)$ where $\circ_{1}, \circ_{2}, \circ_{3} \in \faith$, we 
present below the properties that ensure $x + y = a + b$ (\ie, FastTwoSum is an 
EFT). While previous works on the topic denote the rounding error induced by 
floating-point addition (\ie, $a+b - x$) as   
$e$~\cite{jeannerod:fts-revisited:arith:2025}, we refer to said 
error throughout the rest of the paper using $\delta$.

\begin{center}
\begin{itemize}[leftmargin=*]
	\item[]{\textbf{Property 1.}} $x - a \in \mathbb{F}$
  \item[]{\textbf{Property 2.}} $\delta = a+b - x \in \mathbb{F}$
\end{itemize}
\end{center}

If Property 1 holds, then $z = \circ_{2}(x - a) = x - a$ for all $\circ_{2} \in \faith$. The equality $z = x - a$ implies $y = \circ_{3}(b - z) = \circ_{3}(a + b - x)$, the latter of which is the \textit{correct rounding} of the rounding error $\delta = a+b - x$ under $\circ_{3}$. Properties 1 and 2 thus jointly induce $y = a+b - x$, thereby ensuring the desired equality $x+y = a+b$. We present below an overview of previously established sufficient conditions for each property.

\textbf{Conditions for Property 1.} 
Jeannerod and Zimmermann prove that $a \in \ulp(b)
\mathbb{Z}$ is sufficient for all faithfully rounded sums 
$\circ_{1}(a+b)$~\cite[Lemma 2]{jeannerod:fts-revisited:arith:2025}. We recall 
henceforth that  $a \in \ulp(b)\mathbb{Z}$ ensures $z = x - a$ for all 
$\circ_{1}, \circ_{2} \in \faith$.

\textbf{Conditions for Property 2.} If $\circ \in \rn$, then $\delta = a+b - 
\circ(a+b)$ must be an element of $\mathbb{F}$. This guarantee, however, could 
fail if $a+b$ is not rounded to its closest neighbor (\ie, $|\delta| > 
\frac{1}{2}\ulp(a+b)$). In particular, $\delta$ may not be in $\mathbb{F}$ when 
the exponent ranges occupied by the operands \textit{do not overlap} 
(\eg, $\ulp(a) > |b|$). Suppose $|b| < \frac{1}{2}\ulp(a)$. 
Under \rn, $\circ(a+b) = a$ and $\delta = b$. Because $b 
\in \mathbb{F}$, the rounding error is by default an element of $\mathbb{F}$. 
If $\circ(a+b)$ is not the nearest neighbor of $a+b$ in $\mathbb{F}$, however, 
$\delta$ could be $b \pm \frac{1}{2}\ulp(a)$ or $b \pm \ulp(a)$ (see 
Definition~\ref{def:pred-succ-fp}). In such cases, the final rounding error may
not be exactly representable in $\mathbb{F}$. For example, suppose $a = 2^{p}$ 
and $b = 2^{-p}$. In this example, $|b| < \frac{1}{2}\ulp(a) = 2^{0}$. If 
$\circ = \ru$, $\circ(a+b) = 2^{p} + 2 = a + \ulp(a)$. The resulting rounding 
error is $a+b - \circ(a+b) = 2^{-p} - 2$, and thus $\delta \not\in \mathbb{F}$.

For nonzero $a, b \in \mathbb{F}$, Boldo \etal present $|e_{a} - e_{b}| \le p - 
1$ as a sufficient condition for Property 2 under all $\circ \in 
\faith$~\cite[Lemma 2.6]{boldo:fts-robustness:toms:2018}. Jeannerod and 
Zimmermann prove in ~\cite[Lemma 1]{jeannerod:fts-revisited:arith:2025} that 
the less restrictive conditions $a \in \ulp(b)\mathbb{Z}$ and $e_{a} - e_{b} 
\le p$ suffice. We recall henceforth that $a \in \ulp(b)\mathbb{Z}$ and 
$e_{a} - e_{b} \le p$ ensure $\delta \in \mathbb{F}$ for all $\circ \in \faith$. 
In Section~\ref{sec:faithful}, we establish new conditions that 
correctly ensure $\delta \in \mathbb{F}$ for all faithful rounding modes, 
\textit{even when $|e_{a} - e_{b}|$ exceeds $p$}.

For directed rounding modes, the anticipated rounding direction (\ie the expected sign of $\delta$) can influence $\delta$'s representability. As before, consider operands $a = 2^{p}$ and $b = 2^{-p}$. Under $\rd$, $\delta$ \textit{must be non-negative} (\ie $a+b \ge \rd(a+b)$). Subsequently, $b \ge 0$ would imply $\rd(a+b) = a$ and $\delta = a+b - a = b \in \mathbb{F}$. Likewise, $b \le 0$ is sufficient to guarantee $\delta \in \mathbb{F}$ under $\ru$. While assuming $a \in \ulp(b)\mathbb{Z}$, Jeannerod and Zimmermann prove $b \ge 0$, $b \le 0$, and $a \times b \ge 0$ are sufficient for $\rd$, $\ru$, and $\rz$ respectively~\cite{jeannerod:fts-revisited:arith:2025}. The condition $a \times b \ge 0$ for $\rz$ also appears in~\cite{linnainmaa:fp-analysis:1974}, which assumes $|a| \ge |b|$. In Section~\ref{sec:rto}, we introduce conditions tailored to $\ro$ that \textit{do not restrict the signs of the operands}.

\section{EFT Conditions for Faithful Rounding}\label{sec:faithful}

To establish EFT guarantees for FastTwoSum, we first identify
conditions that ensure $\delta = a+b - \circ(a+b) \in \mathbb{F}$ for
all $\circ \in \faith$. For this purpose, we present a method of
determining $\delta$'s membership based on the sum's magnitude.

\begin{lemma}\label{lemma:faith-1}
Let $p \ge 2$ and $x = \circ(a+b)$. If both conditions
\begin{itemize}
	\item[]{(i)} $\circ \in \faith$
	\item[]{(ii)} $|a+b| \le 2^{E_{min}+1}$
\end{itemize}
are satisfied, then $\delta = a+b - x = 0$.
\end{lemma}

\begin{proof}
	All finite numbers in $\mathbb{F}$ are integer multiples of $\omega$.
	Addition and subtraction preserve common factors, so $a+b \in 
	\omega\mathbb{Z}$. Given $\omega = u \cdot 2^{E_{min}+1}$,
	the bound $|a+b| \le 2^{E_{min+1}}$ and $a+b \in \omega\mathbb{Z}$ 
	collectively imply $a+b \in \mathbb{F}$ due to Lemma~\ref{lemma:fp}.
	Since $x$ is a faithful rounding of $a+b$, it follows that $x = a+b$ and 
	$\delta = 0$.
\end{proof}

Lemma~\ref{lemma:faith-1} signifies that the addends' magnitudes can 
induce implicit bounds on their sum that guarantee $\delta = 0$, which then 
ensures $\delta \in \mathbb{F}$. We proceed with exploring sufficient 
conditions for $\delta \in \mathbb{F}$ by leveraging the \textit{relative} 
magnitudes of $a$ and $b$.

\begin{lemma}\label{lemma:faith-2}
Let $p \ge 2$ and $x = \circ(a+b)$. If all conditions
\begin{itemize}
	\item[]{(i)} $\circ \in \faith$
	\item[]{(ii)} $|a+b| \le \Omega$
	\item[]{(iii)} $|a| \ge u \cdot \ufp(b)$
	\item[]{(iv)} $|b| \ge u \cdot \ufp(a)$
\end{itemize}
are satisfied, then $\delta = a+b - x \in \mathbb{F}$.
\end{lemma}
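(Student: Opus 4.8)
The plan is to reduce the statement to an application of Corollary~\ref{corollary:fp}: it suffices to produce a power of two $\sigma = 2^{k}$ such that $|\delta| \le \min(\sigma, \Omega)$ and $\delta \in u\sigma\mathbb{Z}$. First I would clear the degenerate cases. If $a = 0$ or $b = 0$, then $a+b \in \mathbb{F}$, so $x = a+b$ and $\delta = 0 \in \mathbb{F}$. Otherwise, since conditions (i)--(iv), the error $\delta$, and $x = \circ(a+b)$ are all symmetric in $a$ and $b$, I would assume without loss of generality that $\ufp(a) \ge \ufp(b)$; this makes (iii) automatic, so the real work is done by (iv). If $|a+b| < 2^{E_{min}+1}$, Corollary~\ref{corollary:faith-1} already gives $\delta \in \mathbb{F}$, so I may further assume $|a+b| \ge 2^{E_{min}+1}$. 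This ensures $a+b$ is in the normal range, so $\ulp(a+b) = 2u\cdot\ufp(a+b)$; it also forces $a$ to be normal, because two subnormals have magnitudes below $2^{E_{min}}$ and so cannot sum to magnitude $2^{E_{min}+1}$ or larger.

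Next I would extract the exponent relations. Condition (iv) gives $|b| \ge u\cdot\ufp(a) = 2^{e_a-p}$, hence $e_b \ge e_a - p$, equivalently $\ufp(a) \le 2^{p}\ufp(b)$. The crucial step is to push this through the addition: I claim $\ufp(a+b) \le 2^{p}\ufp(b)$. If $a$ and $b$ have opposite signs this is immediate, since $|a+b| < \max(|a|, |b|) < 2\ufp(a) \le 2^{p+1}\ufp(b)$. If they share a sign, I would instead use that $a$ normal gives the sharp significand bound $|a| \le (2^{p}-1)\cdot 2^{e_a-p+1} \le (2^{p}-1)\cdot 2^{e_b+1}$ (the last step using $e_a \le e_b+p$), which together with $|b| < 2^{e_b+1}$ yields $|a+b| = |a|+|b| < 2^{p}\cdot 2^{e_b+1} = 2^{e_b+p+1}$. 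Either way $\ufp(a+b) \le 2^{e_b+p} = 2^{p}\ufp(b)$.

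With that bound I would set $\sigma = 2\ufp(b) = 2^{e_b+1}$ and check the two hypotheses of Corollary~\ref{corollary:fp}. For the magnitude, faithful rounding gives $|\delta| < \ulp(a+b) = 2u\cdot\ufp(a+b) \le 2u\cdot 2^{p}\ufp(b) = 2\ufp(b) = \sigma$; moreover $|a+b| \le \Omega$ forces $\ufp(a+b) \le 2^{E_{max}}$, so $|\delta| < \ulp(a+b) \le 2^{E_{max}-p+1} < \Omega$, and hence $|\delta| \le \min(\sigma, \Omega)$. For the divisibility condition $\delta \in u\sigma\mathbb{Z}$, observe that $\delta \in \min(\ulp(a), \ulp(b))\mathbb{Z} = \ulp(b)\mathbb{Z}$ (using $\ufp(a) \ge \ufp(b) \Rightarrow \ulp(a) \ge \ulp(b)$), and that $\ulp(b) \ge 2u\cdot\ufp(b) = u\sigma$ with both sides integer powers of two, so $\ulp(b)$ is an integer multiple of $u\sigma$ and therefore $\delta \in \ulp(b)\mathbb{Z} \subseteq u\sigma\mathbb{Z}$. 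Corollary~\ref{corollary:fp} then delivers $\delta \in \mathbb{F}$.

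I expect the same-sign case of the claim $\ufp(a+b) \le 2^{p}\ufp(b)$ to be the one delicate point. The naive estimate $|a+b| < 2\ufp(a) + 2\ufp(b) \le 2^{p+1}\ufp(b) + 2^{e_b+1}$ is a factor of two too large and would break the final inequality $\ulp(a+b) \le \sigma$ precisely at the extreme gap $e_a - e_b = p$. Replacing $|a| < 2\ufp(a)$ by the exact upper bound $|a| \le (2^{p}-1)\cdot 2^{e_a-p+1}$ on a normal significand is what closes the argument, and it is exactly this sharpening that lets the lemma cover the full exponent spread $|e_a - e_b| \le p$ encoded by conditions (iii) and (iv), rather than $|e_a - e_b| \le p - 1$.
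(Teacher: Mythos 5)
Your proof is correct, and it reaches Corollary~\ref{corollary:fp} by a genuinely different route than the paper. The paper fixes $|a| \ge |b|$ and splits on the magnitude of the smaller operand relative to the larger one's $\ufp$ --- Case 1 ($|b| \ge 2u\cdot\ufp(a)$) uses the coarse bound $\ufp(a+b) \le 2\ufp(a)$ with the lattice $4u^{2}\cdot\ufp(a)\mathbb{Z}$, while Case 2 ($u\cdot\ufp(a) \le |b| < 2u\cdot\ufp(a)$) uses $|a+b| < succ(|a|)$, hence $\ufp(a+b) \le \ufp(a)$, with the lattice $2u^{2}\cdot\ufp(a)\mathbb{Z}$ --- so everything is anchored to the \emph{larger} operand. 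You instead anchor to the \emph{smaller} operand: a single lattice $\ulp(b)\mathbb{Z}$ and a single $\sigma = 2\ufp(b)$, with the only case split being on sign agreement, needed to establish $\ufp(a+b) \le 2^{p}\ufp(b)$. Your one delicate step --- replacing $|a| < 2\ufp(a)$ by the sharp normal-significand bound $|a| \le (2^{p}-1)\cdot 2^{e_a-p+1}$ to survive the extreme gap $e_a - e_b = p$ --- is correctly identified and correctly justified (your normality argument for $a$ via $|a+b| \ge 2^{E_{min}+1}$ and $\ufp(a) \ge \ufp(b)$ is sound, and the opposite-sign case needs no such care). The trade-off: the paper's decomposition avoids any appeal to normality or to the exact significand range and reuses the $succ(|a|)$ trick that reappears in Theorem~\ref{theorem:faith-1}, whereas your version is more unified, and the bound $\ufp(a+b) \le 2^{p}\ufp(b)$ makes visible the ``sum representable in $2p$ bits'' structure the paper only remarks on later. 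Your magnitude bound is weaker and your divisibility statement stronger than the paper's, but the ratio $\sigma/(u\sigma) = 2^{p}$ is the same, which is all Corollary~\ref{corollary:fp} requires.
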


\begin{proof}
	Condition (ii) ensures that overflow does not occur for any
	$\circ \in \faith$ (\ie, $|x| \le \Omega$). Lemma~\ref{lemma:faith-1} 
	addresses all cases in which $|a+b| \le 2^{E_{min}+1}$. We thus assume for 
	the remainder of the proof that $|a+b| > 2^{E_{min}+1}$. Because
	$\delta \in \mathbb{F}$ is trivially true when $a = 0$ or $b = 0$, 
	we also disregard such cases.
	
	Suppose $|a| \ge |b|$. By the definition of $\ufp$, $|a| \ge
	|b|$ implies $|a| \ge \ufp(b)$. The bound $|a| \ge \ufp(b)$
	subsequently induces $|a| \ge u \cdot \ufp(b)$, thereby
	satisfying Condition (iii). The alternative assumption $|b|
	> |a|$ would have similarly induced $|b| > u \cdot
	\ufp(a)$, which signifies \textit{either Condition (iii) or
	(iv) trivially holds} for all operands.  

	For all finite $a \in \mathbb{F}$, $a \in \ulp(a)\mathbb{Z}$. 
	If $|a| \ge |b|$, then $\ulp(a) \ge \ulp(b)$ and 
	$a \in \ulp(b)\mathbb{Z}$. Condition (iv) induces $\ufp(b) \ge 
	u \cdot \ufp(a)$, which subsequently implies $e_{a} - e_{b} \le p$. 
	It then follows from~\cite[Lemma 1]{jeannerod:fts-revisited:arith:2025} 
	that $\delta \in \mathbb{F}$. Alternatively, $|b| > |a|$ and Condition (iii) 
	would imply $b \in \ulp(a)\mathbb{Z}$ and $e_{b} - e_{a} \le p$ 
	respectively. We can thus swap $a$ and $b$ and 
	apply~\cite[Lemma 1]{jeannerod:fts-revisited:arith:2025} to complete the 
	proof.
\end{proof}

Conditions (iii) and (iv) each ensure $\ufp(a) \ge u \cdot \ufp(b)$
and $\ufp(b) \ge u \cdot \ufp(a)$. If $|a| > 0$, Condition (iii) is
equivalent to $e_{b} - e_{a} \le p$. Similarly, Condition (iv) is
equivalent to $e_{a} - e_{b} \le p$ when $|b| > 0$. Assuming both
operands are nonzero, the conjunction of Conditions (iii) and (iv) is
subsequently equivalent to $|e_{a} - e_{b}| \le p$. The bound $|e_{a}
- e_{b}| \le p$ is comparable to the conditions $a \in
\ulp(b)\mathbb{Z}$ and $e_{a} - e_{b} \le p$ proposed
in~\cite{jeannerod:fts-revisited:arith:2025}. For any finite $x \in
\mathbb{F}$ such that $|x| \ge 2^{E_{min}}$, $\ulp(x) = 2u \cdot
\ufp(x)$ due to Definition~\ref{def:ufp-ulp}. As such, $|e_{a} -
e_{b}|$ could be \textit{$p$ or greater} when $\ulp(\max(|a|,
|b|)) > \min(|a|, |b|)$. Effectively, Lemma~\ref{lemma:faith-2}
requires that the exponent difference is at most $p$ \textit{when the
  operands' significands do not overlap}.

Although the conditions in Lemma~\ref{lemma:faith-2} are
\textit{sufficient} to ensure $\delta \in \mathbb{F}$, adherence to
the bound $|e_{a} - e_{b}| \le p$ is \textit{not necessary}. For
example, consider operands $a = 2^{p}$ and $b = 2^{-1}$ for which
$|a+b| \le \Omega$ and $e_{a} - e_{b} = p + 1$. If $\circ = \ru$, the
floating-point sum $x = \circ(a+b)$ produces $2^{p} + 2$. The
associated rounding error $\delta = (2^{p}+2^{-1}) - (2^{p}+2) =
2^{-1} - 2$ thus satisfies $\delta \in \mathbb{F}$ for precision $p
\ge 2$. For $\circ = \rd$, $x = 2^{p} = a$. The resulting rounding
error is $\delta = b$, which is by default in $\mathbb{F}$. If $\circ
\in \faith$, then $x \in \{\rd(a+b), \ru(a+b)\}$, and thus $\delta \in
\mathbb{F}$ under all faithful rounding modes in this example.

The previous example exhibits that $\delta$ can be an element of
$\mathbb{F}$ for all $\circ \in \faith$ even when $|e_{a} - e_{b}| >
p$. We present an additional set of less restrictive conditions that
guarantee $\delta \in \mathbb{F}$ for operands that would otherwise be
excluded by Lemma~\ref{lemma:faith-2}.

\begin{theorem}\label{theorem:faith-1}
Let $p \ge 2$ and $x = \circ(a+b)$. If all conditions
\begin{itemize}
	\item[]{(i)} $\circ \in \faith$
	\item[]{(ii)} $|a+b| \le \Omega$
	\item[]{(iii)} $a \in 2u^{2} \cdot \ufp(b)\mathbb{Z}$
	\item[]{(iv)} $b \in 2u^{2} \cdot \ufp(a)\mathbb{Z}$
\end{itemize}
are satisfied, then $\delta = a+b - x \in \mathbb{F}$.
\end{theorem}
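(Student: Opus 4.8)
The plan is to follow the template of the proof of Lemma~\ref{lemma:faith-2}: eliminate degenerate configurations, reduce to a canonical ordering using the symmetry of Conditions (iii) and (iv), derive the power of two that divides $\delta$, and close with Corollary~\ref{corollary:fp} once $|\delta|$ is bounded by the matching power of two.

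First, the reductions. Condition (ii) prevents overflow, so $|x|\le\Omega$ for every $\circ\in\faith$. Corollary~\ref{corollary:faith-1} already gives $\delta\in\mathbb{F}$ when $|a+b|<2^{E_{min}+1}$, so I may assume $|a+b|\ge 2^{E_{min}+1}$; the cases $a=0$, $b=0$, $a+b=0$ give $\delta=0$ and are discarded. Because Conditions (iii) and (iv) are symmetric in $a\leftrightarrow b$, I assume $|a|\ge|b|>0$; then $2|a|\ge|a+b|\ge 2^{E_{min}+1}$ forces $\ufp(a)\ge 2^{E_{min}}$, so $\ulp(a)=2u\cdot\ufp(a)$ and $\ulp(a+b)=2u\cdot\ufp(a+b)$. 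Moreover, $|a|\ge|b|$ makes Condition (iii) automatic, since $a\in\ulp(a)\mathbb{Z}\subseteq 2u^{2}\cdot\ufp(a)\mathbb{Z}\subseteq 2u^{2}\cdot\ufp(b)\mathbb{Z}$ (using $\ufp(b)\le\ufp(a)$). With $a,b\in 2u^{2}\cdot\ufp(a)\mathbb{Z}$, the common-factor argument used in Lemma~\ref{lemma:faith-2} yields $a+b,x,\delta\in 2u^{2}\cdot\ufp(a)\mathbb{Z}$.

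Next, a case split driven by $\ufp(a+b)$. Since $|a+b|\le 2|a|<4\ufp(a)$ we have $\ufp(a+b)\le 2\ufp(a)$, so either $\ufp(a+b)\le\ufp(a)$ or $\ufp(a+b)=2\ufp(a)$. In the first case $|\delta|<\ulp(a+b)=2u\cdot\ufp(a+b)\le 2u\cdot\ufp(a)$, and with $\delta\in 2u^{2}\cdot\ufp(a)\mathbb{Z}$, Corollary~\ref{corollary:fp} (taking $\sigma=2u\cdot\ufp(a)$) delivers $\delta\in\mathbb{F}$. In the second case $|a+b|\ge 2\ufp(a)$ forces $a$ and $b$ to share a sign (opposite signs give $|a+b|<|a|<2\ufp(a)$), hence $|b|=|a+b|-|a|\ge 2\ufp(a)-|a|$; and because $|a|\in\mathbb{F}$ lies in $[\ufp(a),2\ufp(a))$, $|a|\le pred(2\ufp(a))=2\ufp(a)-2u\cdot\ufp(a)$, so $|b|\ge 2u\cdot\ufp(a)$. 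Then $\ufp(b)\ge 2u\cdot\ufp(a)$, so $b\in 2u\cdot\ufp(b)\mathbb{Z}\subseteq 4u^{2}\cdot\ufp(a)\mathbb{Z}$, and with $a\in\ulp(a)\mathbb{Z}\subseteq 4u^{2}\cdot\ufp(a)\mathbb{Z}$ we sharpen the granularity to $\delta\in 4u^{2}\cdot\ufp(a)\mathbb{Z}$; since $|\delta|<\ulp(a+b)=2u\cdot\ufp(a+b)=4u\cdot\ufp(a)$, Corollary~\ref{corollary:fp} (now $\sigma=4u\cdot\ufp(a)$) again gives $\delta\in\mathbb{F}$. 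Throughout, $|a+b|\le\Omega$ gives $\ufp(a+b)\le 2^{E_{max}}$, hence $|\delta|<2u\cdot\ufp(a+b)\le\Omega$, so the hypothesis $|\delta|\le\min(\sigma,\Omega)$ of Corollary~\ref{corollary:fp} is satisfied in both cases.

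The main obstacle is the second case. Condition (iv) alone guarantees only $\delta\in 2u^{2}\cdot\ufp(a)\mathbb{Z}$, against which the available bound $|\delta|<4u\cdot\ufp(a)$ is off by a factor of two — Corollary~\ref{corollary:fp} would need $|\delta|\le 2u\cdot\ufp(a)$. The resolution is the arithmetic observation that $\ufp(a+b)$ can only jump to $2\ufp(a)$ when $a$ sits near the top of its binade, which in turn \emph{forces} $|b|\ge 2u\cdot\ufp(a)$ and thereby supplies the extra factor of two in $b$'s divisibility and hence in $\delta$'s. Establishing $|a|\le pred(2\ufp(a))$ and propagating the divisibility through the rounding $\circ$ is the crux; everything else is bookkeeping parallel to Lemma~\ref{lemma:faith-2}.
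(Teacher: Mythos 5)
Your proof is correct, but it is organized differently from the paper's. The paper keeps its case split on the magnitude of $b$: for $|b| \ge u\cdot\ufp(a)$ it defers entirely to Lemma~\ref{lemma:faith-2} (whose Case~1 is where the $4u^{2}\cdot\ufp(a)$ granularity and the $4u\cdot\ufp(a)$ error bound live), and for $|b| < u\cdot\ufp(a)$ it shows $\ufp(a+b)\le\ufp(a)$ and closes with Corollary~\ref{corollary:fp} at granularity $2u^{2}\cdot\ufp(a)$. You instead split on $\ufp(a+b)$, which merges the paper's two ``no binade jump'' subcases (Case~2 of Lemma~\ref{lemma:faith-2} and Case~2 of this theorem) into a single case, and---this is the genuinely new ingredient---in the binade-jump case $\ufp(a+b)=2\ufp(a)$ you \emph{derive} $|b|\ge 2u\cdot\ufp(a)$ from same-signedness and $|a|\le pred(2\ufp(a))=2\ufp(a)-2u\cdot\ufp(a)$, rather than taking $|b|\ge 2u\cdot\ufp(a)$ as a case hypothesis the way Lemma~\ref{lemma:faith-2} does. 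That observation is what lets your argument stand alone without invoking Lemma~\ref{lemma:faith-2}, and it makes explicit why the factor-of-two loss in the error bound ($|\delta|<4u\cdot\ufp(a)$ instead of $2u\cdot\ufp(a)$) is always compensated by a factor-of-two gain in $b$'s divisibility; note that in this case you do not even need Condition (iv). The paper's route buys reuse of an already-proved lemma; yours buys a self-contained and arguably sharper account of where each power of two comes from. The supporting steps all check out: $2|a|\ge|a+b|\ge 2^{E_{min}+1}$ legitimizes $\ulp(a)=2u\cdot\ufp(a)$ and the $pred$ formula for the power of two $2\ufp(a)$, the divisibility of $x$ follows from the same common-factor fact the paper uses in Lemma~\ref{lemma:faith-2}, and both invocations of Corollary~\ref{corollary:fp} use a $\sigma$ that matches the established granularity and bound.
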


\begin{proof}
	Given Condition (ii), $x = \circ(a+b)$ is finite for all $\circ \in
        \faith$. As before, we assume $a \neq 0$ and $b \neq 0$.
        We also assume $|a+b| > 2^{E_{min}+1}$ as Lemma~\ref
        {lemma:faith-1} is sufficient otherwise. Because $a \in 2u^{2} \cdot
        \ufp(a)\mathbb{Z}$ for all $a \in \mathbb{F}$, $|a| \ge |b|$
        ensures $\ufp(a) \ge \ufp(b)$ and $a \in 2u^{2} \cdot
        \ufp(b)\mathbb{Z}$. Similarly, $|b| > |a|$ guarantees
        $\ufp(b) \ge \ufp(a)$ and $b \in 2u^{2} \cdot
        \ufp(a)\mathbb{Z}$. Hence, all finite, nonzero $a, b \in \mathbb{F}$
        trivially satisfy at least one of Conditions (iii) and
        (iv). Without loss of generality, we henceforth assume $|a|
        \ge |b|$. We proceed with the proof by decomposing Condition
        (iv) into two cases: $|b| \ge u \cdot \ufp(a)$ and $|b| < u
        \cdot \ufp(a)$.

	\textbf{Case 1: $\bm{b \in 2u^{2}\cdot \ufp(a)\mathbb{Z}}$ and
        $\bm{|b| \ge u \cdot \ufp(a)}$.} Note that Conditions (i)
        and (ii) are identical to their counterparts in
        Lemma~\ref{lemma:faith-2}. Because $|b| \ge u \cdot \ufp(b)$
        for all finite $b \in \mathbb{F}$, $|a| \ge |b|$ guarantees
        $|a| \ge u \cdot \ufp(b)$. The assumptions $|a| \ge |b|$ and
        $|b| \ge u \cdot \ufp(a)$ therefore jointly satisfy the
        conditions in Lemma~\ref{lemma:faith-2}, which subsequently
        ensures $\delta \in \mathbb{F}$.

	\textbf{Case 2: $\bm{b \in 2u^{2}\cdot \ufp(a)\mathbb{Z}}$ and
          $\bm{|b| < u \cdot \ufp(a)}$.} Since $|a| \ge |b|$ and $|b| < u
        \cdot \ufp(a)$, $|a+b| < \fsucc(|a|)$ due to
        Definition~\ref{def:pred-succ-fp}. From the bound on $|a+b|$,
        we derive $\ufp(a+b) \le \ufp(a)$. Given the expected error
        bound $|\delta| < \ulp(a+b) = 2u \cdot \ufp(a+b)$ under
        faithful rounding, $|\delta|$ is thus strictly less than $2u
        \cdot \ufp(a)$. Since $a$ immediately satisfies $a \in 2u^{2}
        \cdot \ufp(a) \mathbb{Z}$, we derive $\delta \in 2u^{2} \cdot
        \ufp(a) \mathbb{Z}$ through Condition (iv). Through
        Corollary~\ref{corollary:fp}, the bound $|\delta| < 2u \cdot
        \ufp(a)$ and $\delta \in 2u^{2} \cdot \ufp(a)\mathbb{Z}$
        collectively ensure $\delta \in \mathbb{F}$.
	
	If $|b| > |a|$, $a$ and $b$ immediately satisfy Condition
        (iv). We can prove $\delta \in \mathbb{F}$ for $|b| > |a|$
        by decomposing Condition (iii) into $|a| \ge u \cdot \ufp(b)$
        and $|a| < u \cdot \ufp(b)$ as has been shown for Condition
        (iv).

\end{proof}

Any operands that satisfy the conditions in Lemma~\ref{lemma:faith-2}
immediately satisfy those in Theorem~\ref{theorem:faith-1}. While we
omit the proof for brevity, one can easily deduce $|a| \ge u \cdot
\ufp(b)$ and $|b| \ge u \cdot \ufp(a)$ each imply $a \in 2u^{2} \cdot
\ufp(b)\mathbb{Z}$ and $b \in 2u^{2} \cdot \ufp(a)\mathbb{Z}$. It then
follows that Theorem~\ref{theorem:faith-1} guarantees $\delta \in
\mathbb{F}$ for all nonzero $a, b \in \mathbb{F}$ such that $|e_{a} -
e_{b}| \le p$.

Due to Definition~\ref{def:ufp-ulp}, the $\ulp$ of a given number $x$
is bounded by $\ulp(x) \ge 2u \cdot \ufp(x)$. For any $x \in
\mathbb{F}$ such that $2u^{2}\cdot \ufp(x) \ge \omega$, the term
$2u^{2} \cdot \ufp(x)$ as used in Conditions (iii) and (iv) is
effectively the $\ulp$ of $x$ \textit{if $\mathbb{F}$ had twice the
  available precision}. Consequently, the exponent difference for
operands that satisfy these conditions could potentially exceed
$p$. We highlight the implications of Theorem~\ref{theorem:faith-1}
through the following example.

Suppose $b$ is exactly equal to $2u^{2} \cdot \ufp(a)$, thereby
satisfying $b \in 2u^{2} \cdot \ufp(a)\mathbb{Z}$. If $b = 2u^{2}
\cdot \ufp(a)$, then $\ufp(b) = 2u^{2} \cdot \ufp(a)$. The latter
equality ensures $2 u^{2} \cdot \ufp(b) < 2 u^{2} \cdot
\ufp(a)$. Since $a$ immediately satisfies $a \in 2u^{2} \cdot
\ufp(a)\mathbb{Z}$, $b = 2u^{2} \cdot \ufp(a)$ indicates $a \in 2u^{2}
\cdot \ufp(b)\mathbb{Z}$. Assuming $|a+b| \le \Omega$, the operands in
this example meet all conditions in
Theorem~\ref{theorem:faith-1}. Since $u = 2^{-p}$, $\ufp(b) = 2u^{2}
\cdot \ufp(a)$ implies $\ufp(b) = 2^{1-2p}\cdot \ufp(a)$. The exponent
difference $e_{a} - e_{b}$ in this example is thus $2p -1$, which is
nearly double the bound enforced by
Lemma~\ref{lemma:faith-2}. Theorem~\ref{theorem:faith-1}'s conditions
are thus significantly less restrictive than those in
Lemma~\ref{lemma:faith-2} and are applicable to a wider domain.

Theorem 1's conditions also ensure that the real sum $a+b$ is exactly
representable in $2p$-bits of precision. Suppose $|a| > |b|$ and
$e_{a} - e_{b} = p + k$, where $0 \le k \le p-1$. In this example,
Condition (iv) would ensure that $b$ has at least $k$ trailing
zeroes. Consequently, $b$ will have at most $p - k$ effective bits
(\ie, available precision minus trailing zeroes). Hence, $a+b$ would
require at most $e_a - e_b + (p - k) = 2p$ bits to represent.

We now explore conditions that guarantee FastTwoSum is an EFT. As
described in Section~\ref{sec:background}, $x - a \in \mathbb{F}$
ensures $z = \circ_{2}(x - a) = x - a$ and $y = \circ_{3}(a + b - x) =
\circ_{3}(\delta)$. Therefore, $\delta \in \mathbb{F}$ is sufficient
for $x + y = a + b$ when $x - a \in \mathbb{F}$. We establish
properties that guarantee $x+y = a+b$ by combining the respective
requirements for $x - a \in \mathbb{F}$ and $\delta \in \mathbb{F}$.

\begin{theorem}\label{theorem:faith-2}
Let $p \ge 2$, $x = \circ_{1}(a+b)$, $z = \circ_{2}(x-a)$, and $y = \circ_{3}(b-z)$. If all conditions
\begin{itemize}
	\item[]{(i)} $\circ_{1}, \circ_{2}, \circ_{3} \in \faith$
	\item[]{(ii)} $|a+b| \le \Omega$
	\item[]{(iii)} $a \in \ulp(b)\mathbb{Z}$
	\item[]{(iv)} $b \in 2u^{2} \cdot \ufp(a)\mathbb{Z}$
\end{itemize}
are satisfied, then $x+y = a+b$. 
\end{theorem}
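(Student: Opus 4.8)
The plan is to reduce the claim to the two sufficient properties isolated in Section~\ref{sec:background}: Property~1 ($x-a \in \mathbb{F}$) and Property~2 ($\delta = a+b-x \in \mathbb{F}$). Condition~(i) supplies $\circ_1,\circ_2 \in \faith$ and Condition~(iii) is exactly $a \in \ulp(b)\mathbb{Z}$, so the result assumed in Section~\ref{sec:background} (due to Jeannerod \etal) gives $z = \circ_2(x-a) = x-a$. Substituting this exact value, $b - z = (a+b) - x = \delta$, hence $y = \circ_3(b-z) = \circ_3(\delta)$. Thus it suffices to prove Property~2: once $\delta \in \mathbb{F}$, faithful rounding fixes it ($\circ_3(\delta) = \delta$), so $y = \delta = a+b-x$ and therefore $x+y = a+b$.

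To obtain Property~2 I would invoke Theorem~\ref{theorem:faith-1} with $\circ := \circ_1$. Its Conditions~(i) and~(ii) are inherited verbatim from Conditions~(i) and~(ii) here, and its Condition~(iv), $b \in 2u^{2}\cdot\ufp(a)\mathbb{Z}$, is literally Condition~(iv) here. The only hypothesis of Theorem~\ref{theorem:faith-1} not stated outright is its Condition~(iii), namely $a \in 2u^{2}\cdot\ufp(b)\mathbb{Z}$, which I would derive from Condition~(iii), $a \in \ulp(b)\mathbb{Z}$.

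For that derivation I would show $\ulp(b)$ is itself an integer multiple of $2u^{2}\cdot\ufp(b)$, and then conclude via the basic fact that $x \in 2^{k}\mathbb{Z}$ implies $x \in 2^{i}\mathbb{Z}$ for $i < k$. Assume first $b \neq 0$, so $\ufp(b) = 2^{e_b}$ and both $2u^{2}\cdot\ufp(b)$ and $\ulp(b)$ are integer powers of two; Definition~\ref{def:ufp-ulp} gives $\ulp(b) \ge 2u\cdot\ufp(b)$, hence $2u^{2}\cdot\ufp(b) = u\cdot(2u\cdot\ufp(b)) \le u\cdot\ulp(b) < \ulp(b)$. Two powers of two with the first strictly below the second means the first divides the second, so $\ulp(b) \in (2u^{2}\cdot\ufp(b))\mathbb{Z}$ and consequently $a \in \ulp(b)\mathbb{Z} \subseteq (2u^{2}\cdot\ufp(b))\mathbb{Z}$. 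The degenerate cases $a = 0$, $b = 0$, and $a+b = 0$ I would peel off first, exactly as Theorem~\ref{theorem:faith-1} does: in each $\delta = 0 \in \mathbb{F}$ outright, so no appeal to the $\ufp$-divisibility step is needed (which would be vacuous when $\ufp(b) = 0$). With all hypotheses of Theorem~\ref{theorem:faith-1} verified, $\delta \in \mathbb{F}$ follows, which as noted above yields $x+y = a+b$.

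There is no genuine obstacle here; the one place demanding care is the power-of-two divisibility step translating $a \in \ulp(b)\mathbb{Z}$ into $a \in 2u^{2}\cdot\ufp(b)\mathbb{Z}$, together with peeling off the $b = 0$ (and symmetric) corner cases before that translation. Everything else is bookkeeping, since the remaining hypotheses of Theorem~\ref{theorem:faith-1} coincide with Conditions~(i), (ii), and~(iv) of the present statement and Property~1 is handled by the cited result.
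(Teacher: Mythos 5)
Your proposal is correct and follows essentially the same route as the paper: establish $z = x-a$ from Condition~(iii) via the cited result of Jeannerod \etal, then feed Conditions~(i), (ii), (iv) together with the observation that $\ulp(b) \ge 2u^{2}\cdot\ufp(b)$ (hence $a \in \ulp(b)\mathbb{Z}$ implies $a \in 2u^{2}\cdot\ufp(b)\mathbb{Z}$) into Theorem~\ref{theorem:faith-1} to get $\delta \in \mathbb{F}$ and thus $y = a+b-x$. Your extra care in spelling out the power-of-two divisibility and peeling off the $b=0$ corner case (where $\ufp(b)=0$ makes the translation vacuous) is slightly more thorough than the paper's one-line justification, but the argument is the same.
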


\begin{proof}
	As mentioned in Section~\ref{sec:background}, Condition (iii)
        guarantees $x - a \in \mathbb{F}$ and $z = \circ_{2}(x - a) =
        x-a$ for all $\circ_{1}, \circ_{2} \in \faith$. Because the
        definitions of $\ufp$ and $\ulp$ imply $\ulp(b) \ge 2u^{2}
        \cdot \ufp$(b), any operands that meet Condition (iii)
        implicitly satisfy $a \in 2u^{2} \cdot \ufp(b)\mathbb{Z}$
        (\ie, Condition (iii) in Theorem~\ref{theorem:faith-1}). All
        $a, b \in \mathbb{F}$ that meet Conditions (i) through (iv)
        therefore satisfy the conditions in
        Theorem~\ref{theorem:faith-1}, which confirms $\delta \in
        \mathbb{F}$ is valid for all $\circ_{1} \in \faith$. If $z =
        x-a$ and $\delta = a+b - x \in \mathbb{F}$, then $y =
        \circ_{3}(a+b - x) = a+b-x$ for all $\circ_{3} \in
        \faith$. Conditions (i) through (iv) thus suffice to conclude
        $x+y = a+b$.
\end{proof}

Conditions (iii) and (iv) each impose a lower bound on the magnitude
of $a$ and $b$ respectively when one is less than the other. If $|a|
\ge |b|$, $a$ and $b$ immediately satisfy Condition (iii). When $|a| <
|b|$, Condition (iii) requires any nonzero $|a|$ to be no less than
$\ulp(b)$. Since $\ulp(b) \ge 2u \cdot \ufp(b)$, any nonzero operands
that meet Condition (iii) are subject to the bound $e_{b} - e_{a} \le
p-1$. Similarly, Condition (iv) requires that $|b|$
satisfies $|b| \ge 2u^{2} \cdot \ufp(a)$ when $|a| > |b|$. The nonzero
operands that meet Condition (iv) thus satisfy $e_{a} - e_{b} \le 2p -
1$. Theorem~\ref{theorem:faith-2} thus maintains the \textit{maximum
possible exponent difference} permissible under
Theorem~\ref{theorem:faith-1} while imposing a stricter upper bound on
$e_{b} - e_{a}$ to ensure $x - a \in \mathbb{F}$.

\section{EFT Conditions for Round-to-Odd}\label{sec:rto}

For any $r \in \mathbb{R}$, Definition~\ref{def:rto} (see Section~\ref{sec:background}) ensures that $x = \ro(r)$ has an even significand only if $r \in \mathbb{F}$. Assuming $p \ge 2$, the definition $\Omega = (2^{p} - 1) \times 2^{E_{max}-p+1}$ implies that the maximum magnitude finite elements in $\mathbb{F}$ have odd significands. For all finite $a, b \in \mathbb{F}$ such that $|a+b| > \Omega$, $a+b$ is not an element of $\mathbb{F}$. \ro would thus round such sums to a neighbor $x \in \mathbb{F}$ for which $|x| = \Omega$. \ro effectively enforces \textit{saturation}, thereby preventing overflow for all sums $x = \ro(a+b)$. Leveraging this property, we present a method to determine $\delta = a+b - \ro(a+b) \in \mathbb{F}$ based on $|a+b|$.

\begin{lemma}\label{lemma:rto-1}
Let $p \ge 2$, $a, b \in \mathbb{F} \setminus \{ \pm \infty \}$, and 
$x = \ro(a+b)$. If $|a+b| > \Omega$, then $\delta = a+b - x \in \mathbb{F}$.
\end{lemma}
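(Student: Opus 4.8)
Assume $|a+b| > \Omega$. The plan is to pin down exactly what $x = \ro(a+b)$ must be, then compute $\delta = a+b-x$ and show it lands in $\mathbb{F}$ via the tools already developed, namely Lemma~\ref{lemma:fp} and the divisibility facts from the background section.

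First I would argue that $\ro$ saturates: since $|a+b| > \Omega$ and $\Omega$ is the largest finite magnitude, $a+b \notin \mathbb{F}$, so $\ro(a+b) \in \{pred(a+b), succ(a+b)\}$. By sign preservation, $x$ has the same sign as $a+b$; without loss of generality take $a+b > 0$. Then $pred(a+b) = \Omega$ (the largest finite number below $a+b$) and $succ(a+b) = +\infty$. Because the excerpt observes that $\Omega = (2^p-1)\times 2^{E_{max}-p+1}$ has an \emph{odd} significand when $p \ge 2$, Definition~\ref{def:rto} forces $\ro(a+b)$ to be the odd finite neighbor, i.e. $x = \Omega$ (equivalently $x = -\Omega$ in the negative case, so in general $x = \sign(a+b)\cdot\Omega$).

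Next I would bound $\delta = a+b - x$. With $x = \Omega$ and $a+b > 0$, we have $0 < \delta = (a+b) - \Omega \le 2\Omega - \Omega = \Omega$, using $|a+b| = |a| + \dots$; more carefully, $|a+b| \le |a| + |b| \le 2\Omega$, so $|\delta| = (a+b) - \Omega \le 2\Omega - \Omega = \Omega$. For the divisibility side, recall from the background that $a, b \in \ulp(a)\mathbb{Z}$-type reasoning gives $a+b \in \min(\ulp(a),\ulp(b))\mathbb{Z}$, and in particular $a+b \in \omega\mathbb{Z}$; also $\Omega = (2^p-1)\,\omega'$ where I should check $\Omega \in \omega\mathbb{Z}$ (it is, since every finite element of $\mathbb{F}$ is an integer multiple of $\omega$, and more relevantly $\Omega \in \ulp(\Omega)\mathbb{Z}$). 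So $\delta = (a+b) - \Omega \in \omega\mathbb{Z}$, and in fact $\delta$ is an integer multiple of the coarser quantity $\min(\ulp(a),\ulp(b),\ulp(\Omega))$.

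**Finishing via Lemma~\ref{lemma:fp}.** To apply Lemma~\ref{lemma:fp} I want $\sigma = 2^k$ with $|\delta| \le \min(\sigma,\Omega)$ and $\delta \in \max(u\sigma,\omega)\mathbb{Z}$. The clean choice is $\sigma = 2^{E_{max}}$ or $\sigma = \ufp(\Omega)\cdot 2 = 2^{E_{max}+1}$: since $|\delta| \le \Omega$ the magnitude condition holds with $\min(\sigma,\Omega) = \Omega$, and I need $\delta \in \max(u\cdot 2^{E_{max}}, \omega)\mathbb{Z} = \ulp(\Omega)\mathbb{Z}$ (assuming normal range). The main obstacle is therefore showing $\delta \in \ulp(\Omega)\mathbb{Z}$: this requires $a+b \in \ulp(\Omega)\mathbb{Z}$, which is \emph{not} automatic — it needs $\min(\ulp(a),\ulp(b)) \mid \ulp(\Omega)$, i.e. that both operands are integer multiples of $\ulp(\Omega) = 2^{E_{max}-p+1}$. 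But $|a+b| > \Omega$ together with $|a| \le \Omega$ forces $|b| > 0$ and, combined with $|a| \le \Omega$, one of the operands must have exponent $E_{max}$; in fact $\max(|a|,|b|) > \Omega/2$, so $\max(|a|,|b|)$ has exponent $E_{max}$ and hence ulp equal to $\ulp(\Omega)$. The subtle point is the \emph{smaller} operand: if, say, $|a| = E_{max}$-exponent and $|b|$ is tiny, then $a+b \in \ulp(b)\mathbb{Z}$ with $\ulp(b)$ possibly much smaller than $\ulp(\Omega)$, which is fine for divisibility (smaller ulp still divides into $\delta$? no — we need $\delta$ a multiple of the \emph{coarse} $\ulp(\Omega)$, and $a$ being such a multiple plus $b$ being tiny breaks this). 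I expect the resolution is that when $|a+b| > \Omega$, the smaller operand cannot be arbitrarily small: since $|a| \le \Omega$ and $|a+b| > \Omega$, we get $|b| > |a+b| - |a| \ge 0$, which is too weak; but also $|b| \le \Omega$ and symmetric reasoning shows $\max(|a|,|b|) > \Omega/2$ hence its exponent is $E_{max}$, and then $a+b - \Omega$ has $\ufp(a+b-\Omega) \le \ufp(a+b) = \ufp(\Omega) = 2^{E_{max}}$, so $|\delta| < 2^{E_{max}+1}$ but I still need the multiple structure. The cleanest route is: the larger operand $c = \mathrm{argmax}(|a|,|b|)$ satisfies $\ulp(c) = \ulp(\Omega)$, and the smaller operand $d$ satisfies $d \in \ulp(d)\mathbb{Z} \subseteq \omega\mathbb{Z}$; then $\delta = c + d - \Omega$ where $c - \Omega \in \ulp(\Omega)\mathbb{Z}$, so $\delta \in \gcd(\ulp(\Omega), \ulp(d))\mathbb{Z} = \ulp(d)\mathbb{Z}$ if $\ulp(d) \mid \ulp(\Omega)$. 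So actually $\delta \in \min(\ulp(c),\ulp(d))\mathbb{Z} = \min(\ulp(a),\ulp(b))\mathbb{Z}$, and I should pick $\sigma$ accordingly: take $\sigma = 2^{E_{max}}$, so $u\sigma = 2^{E_{max}-p} = \tfrac12\ulp(\Omega) \ge \omega$, and I need $\delta \in \tfrac12\ulp(\Omega)\mathbb{Z}$ — hmm, this is \emph{stronger} than $\delta \in \ulp(\Omega)\mathbb{Z}$. I think the right move is to instead take $\sigma$ as small as the divisibility allows: let $\sigma = \ulp(\Omega)/u = 2^{E_{max}+1}$ giving $u\sigma = \ulp(\Omega)$ — wait that overshoots $\Omega$ in the min. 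Let me just take $\sigma = 2^{E_{max}+1}$: then $\min(\sigma,\Omega) = \Omega \ge |\delta|$ ✓, and $\max(u\sigma,\omega) = \max(2^{E_{max}+1-p},\omega) = \ulp(\Omega)$, so I need $\delta \in \ulp(\Omega)\mathbb{Z}$, which follows since both $c-\Omega$ and $d$ are multiples of $\ulp(\Omega)$ \emph{provided $\ulp(d) \le \ulp(\Omega)$}, i.e. always (every finite number has ulp at most $\ulp(\Omega)$). That last fact — $\ulp(x) \le \ulp(\Omega)$ for every finite $x$, since $\ufp(x) \le \ufp(\Omega)$ — is exactly the monotonicity noted after Definition~\ref{def:ufp-ulp}, so $d \in \ulp(d)\mathbb{Z} \subseteq \ulp(\Omega)\mathbb{Z}$ holds. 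Hence $\delta \in \ulp(\Omega)\mathbb{Z}$, Lemma~\ref{lemma:fp} applies with $\sigma = 2^{E_{max}+1}$, and $\delta \in \mathbb{F}$.

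I would write this up as: (1) saturation and identification $x = \sign(a+b)\cdot\Omega$; (2) the bound $|\delta| \le \Omega$ from $|a+b| \le 2\Omega$; (3) divisibility $\delta \in \ulp(\Omega)\mathbb{Z}$ via $\Omega \in \ulp(\Omega)\mathbb{Z}$, $a,b \in \ulp(\Omega)\mathbb{Z}$ (monotonicity of ulp), and closure of $\ulp(\Omega)\mathbb{Z}$ under $\pm$; (4) invoke Lemma~\ref{lemma:fp} with $\sigma = 2^{E_{max}+1}$. The only genuinely delicate step is (3) — making sure every finite operand, however small, is still an integer multiple of the \emph{large} quantity $\ulp(\Omega)$, which is false; so I must instead observe that $\ulp(\Omega)$ is the \emph{largest} ulp, hence every finite $x$ satisfies $x \in \ulp(x)\mathbb{Z}$ with $\ulp(x) \mid \ulp(\Omega)$ only when $\ulp(x) \le \ulp(\Omega)$ — which is indeed always — and conclude $x \in \ulp(\Omega)\mathbb{Z}$; wait, $\ulp(x) \le \ulp(\Omega)$ does not give $\ulp(x) \mid \ulp(\Omega)$ in general, but since both are powers of two it does. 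This is the subtle point to get right, but it is routine once phrased as "powers of two are totally ordered by divisibility." So I would phrase step (3) precisely that way.
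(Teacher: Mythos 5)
Your step (3) contains a genuine error that the rest of the argument cannot survive. You claim that every finite operand $d$ satisfies $d \in \ulp(\Omega)\mathbb{Z}$ because $d \in \ulp(d)\mathbb{Z}$ and $\ulp(d)$ divides $\ulp(\Omega)$. The implication runs the wrong way: being an integer multiple of a \emph{divisor} of $\ulp(\Omega)$ says nothing about being a multiple of $\ulp(\Omega)$ itself (from $x \in 2^{k}\mathbb{Z}$ one may only descend to $2^{i}\mathbb{Z}$ for $i \le k$, as noted in Section~\ref{sec:background}). A concrete counterexample to your divisibility claim: $a = \Omega$, $b = \omega$ gives $|a+b| > \Omega$ and $\delta = \omega$, which is not a multiple of $\ulp(\Omega) = 2^{E_{max}-p+1}$. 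So your invocation of Lemma~\ref{lemma:fp} with $\sigma = 2^{E_{max}+1}$ collapses. The correct divisibility, which you in fact derive midway through ($\delta \in \min(\ulp(a),\ulp(b))\mathbb{Z}$), cannot be paired with your magnitude bound $|\delta| \le \Omega$: a multiple of $\ulp(b)$ that is bounded only by $\Omega$ may require on the order of $E_{max} - e_{b} + p$ bits, far more than $p$.

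The missing idea is a much sharper bound on $|\delta|$. The paper shows $|\delta| \le |b|$, the \emph{smaller} operand (taking $|a| \ge |b|$ without loss of generality): since $x$ and $a+b$ share a sign and $|x| = \Omega < |a+b|$, one has $|\delta| = |a+b| - \Omega$, and $|\delta| > |b|$ would force $|a+b| > \Omega + |b| \ge |a| + |b|$, contradicting the triangle inequality. With $|\delta| \le |b| < 2\ufp(b)$ and $\delta \in 2u \cdot \ufp(b)\mathbb{Z}$ (both operands and $\Omega$ are multiples of $2u \cdot \ufp(b)$ because $\ufp(a) \ge \ufp(b)$), Corollary~\ref{corollary:fp} applies with $\sigma = 2\ufp(b)$ and yields $\delta \in \mathbb{F}$. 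Your step (1), identifying $x = \pm\Omega$ via the odd significand of $\Omega$, is correct and matches the paper; the failure is entirely in how you bound and factor $\delta$ afterwards.
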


\begin{proof}
  If $|a+b| > \Omega$, then $|x| = \Omega$ due to saturation. Without loss of generality, we assume $|a| \ge |b|$. The assumptions $|a+b| > \Omega$ and $|a| \ge |b|$ imply $a \times b > 0$  and $\ufp(a) = 2^{E_{max}}$ as $|a+b| \le \Omega$ otherwise. Since $|a| \ge |b|$ implies $\ufp(a) \ge \ufp(b)$, we derive $a, b \in 2 u \cdot \ufp(b)\mathbb{Z}$. It then follows that $\delta \in 2u \cdot \ufp(b)\mathbb{Z}$.
  
  Since $\ro \in \faith$ and $|x|, |a+b| > 0$, it immediately follows that $x 
  \cdot (a+b) > 0$. Given $|a| \le \Omega$ (\ie, $a$ is finite), 
  $|a+b| > \Omega$, $|x| = \Omega$, and $x \cdot (a+b) > 0$, it then follows 
  that $|\delta| = |a+b - x| \le |a+b| - |x| \le |a| + |b| - \Omega \le |b|$.
  From $|\delta| \le |b|$, we subsequently derive $|\delta| < 2\ufp(b)$. Due to 
  Corollary~\ref{corollary:fp}, $\delta \in 2u \cdot \ufp(b)\mathbb{Z}$ and $|\delta| < 
  2\ufp(b)$ certify $\delta \in \mathbb{F}$.
\end{proof}

Lemma~\ref{lemma:rto-1} signifies that the rounding error of $\circ(a+b)$ is in $\mathbb{F}$ when $|a+b| > \Omega$ and the rounding mode enforces saturation. The bound $|a+b| > \Omega$ is thus also sufficient for $\delta \in \mathbb{F}$ when $x = \rz(a+b)$. We note, however, saturation only guarantees $x$ and $\delta$ are finite; it does not preserve all properties of faithful rounding expected for $|a+b| \le \Omega$. Specifically, saturation does not guarantee $|\delta| < \ulp(a+b)$.

Since $\ro \in \faith$, the conditions presented in Section~\ref{sec:faithful} directly apply to $\ro$. Theorem~\ref{theorem:faith-1} provides the relevant conditions for determining $\delta \in \mathbb{F}$ when $|a+b| \le \Omega$. Combining Theorem~\ref{theorem:faith-1} and Lemma~\ref{lemma:rto-1}, we derive the following conditions for $\ro$ that are independent of the magnitude of $a+b$.
 
\begin{corollary}\label{corollary:rto-1}
Let $p \ge 2$, $a, b \in \mathbb{F} \setminus \{ \pm \infty \}$, and 
$x = \ro(a+b)$. If both conditions

\begin{itemize}
	\item[]{(i)} $a \in 2u^{2} \cdot \ufp(b)\mathbb{Z}$
	\item[]{(ii)} $b \in 2u^{2} \cdot \ufp(a)\mathbb{Z}$
\end{itemize}
are satisfied, then $\delta = a+b - x \in \mathbb{F}$.
\end{corollary}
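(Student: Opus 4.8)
The plan is to dispatch the statement by a single case split on the magnitude of the real sum $a+b$ relative to $\Omega$, with each case handled by a result already established in the excerpt. Because $\ro \in \faith$ (noted in the subsection on round-to-odd properties), Conditions (i) and (ii) of this corollary are \emph{syntactically identical} to Conditions (iii) and (iv) of Theorem~\ref{theorem:faith-1}, so the only thing that separates us from invoking that theorem is its Condition (ii), the overflow bound $|a+b| \le \Omega$. That observation is what motivates the dichotomy.

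First I would handle the case $|a+b| > \Omega$. Here Conditions (i) and (ii) are not even needed: Lemma~\ref{lemma:rto-1} applies directly to $x = \ro(a+b)$ and yields $\delta = a+b - x \in \mathbb{F}$ outright, using saturation of $\ro$ at $\pm\Omega$. Second I would handle $|a+b| \le \Omega$. In this case I assemble the hypotheses of Theorem~\ref{theorem:faith-1}: its Condition (i) holds because $\ro \in \faith$; its Condition (ii) is exactly the case assumption $|a+b| \le \Omega$; and its Conditions (iii) and (iv) are precisely Conditions (i) and (ii) of the present corollary. Theorem~\ref{theorem:faith-1} then gives $\delta \in \mathbb{F}$. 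Since the two cases are exhaustive, $\delta \in \mathbb{F}$ holds unconditionally under (i) and (ii), completing the proof.

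I do not expect any real obstacle here: the corollary is essentially a packaging of Theorem~\ref{theorem:faith-1} (for the in-range case) with Lemma~\ref{lemma:rto-1} (for the saturating case) into a single set of conditions that no longer mentions $|a+b|$. The only point that warrants a sentence of care is confirming that the $|a+b| > \Omega$ branch genuinely needs nothing beyond $x = \ro(a+b)$ — in particular that Lemma~\ref{lemma:rto-1} is stated without Conditions (i) and (ii) — and noting, as elsewhere in the paper, that the degenerate subcases $a = 0$, $b = 0$, or $a+b = 0$ make $\delta \in \mathbb{F}$ trivially (they fall into $|a+b| \le \Omega$ and are already absorbed by Theorem~\ref{theorem:faith-1}'s proof). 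The write-up should therefore be short: state the dichotomy, cite Lemma~\ref{lemma:rto-1} for the first branch, verify the four hypotheses of Theorem~\ref{theorem:faith-1} for the second, and conclude.
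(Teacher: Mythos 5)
Your proposal is correct and matches the paper's intended derivation exactly: the paper states this corollary without a separate proof, introducing it with the remark that it follows by ``combining Theorem~\ref{theorem:faith-1} and Lemma~\ref{lemma:rto-1},'' which is precisely your dichotomy on $|a+b|$ versus $\Omega$. Nothing further is needed.
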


For directed rounding modes, the rounding direction for a given $r \in \mathbb{R}$ is either fixed (\eg, $\rd$ and $\ru$) or determined by the sign of $r$ (\eg, \rz). \ro, on the other hand, determines the rounding direction for a number based on \textit{the parity of its neighbors' integral significands}. $\ro$'s dependence on the parity of elements in $\mathbb{F}$ induces the following property.

\begin{lemma}\label{lemma:rto-2}
Let $p \ge 2$, $r \in \mathbb{R}$, and $x = M_{x} \times 2^{E_{x}-p+1} 
\in \mathbb{F} \setminus \{ \pm \infty \}$. If both conditions

\begin{itemize}
	\item[]{(i)} $M_{x}$ is an odd integer
	\item[]{(ii)} $\fpred(x) < r <\fsucc(x)$
\end{itemize}
are satisfied, then $x = \ro(r)$.
\end{lemma}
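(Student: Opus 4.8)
The plan is to show that $x$ satisfies the defining property of $\ro(r)$ from Definition~\ref{def:rto}. First I would dispose of the trivial case: if $r \in \mathbb{F}$, then condition (ii) forces $r = x$ (since $x$ is the only element of $\mathbb{F}$ strictly between $pred(x)$ and $succ(x)$), and $\ro(r) = r = x$ by the first case of Definition~\ref{def:rto}. So henceforth I would assume $r \notin \mathbb{F}$.

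Next, the key observation is that condition (ii), $pred(x) < r < succ(x)$ together with $r \notin \mathbb{F}$, pins down $r$'s floating-point neighbors: since $x$ is the unique element of $\mathbb{F}$ in the open interval $(pred(x), succ(x))$ and $r \ne x$, either $pred(x) < r < x$ or $x < r < succ(x)$. In the first subcase, $succ(r) = x$ and $pred(r) = pred(x)$; in the second, $pred(r) = x$ and $succ(r) = succ(x)$. Now I apply Definition~\ref{def:rto}. In the subcase $x < r < succ(r) = succ(x)$: we have $pred(r) = x$ with $M_x$ odd (condition (i)), so the ``otherwise'' branch of Definition~\ref{def:rto} applies precisely when $M_{succ(r)}$ is \emph{not} odd --- and indeed, since consecutive floating-point significands within a binade alternate in parity (and $succ(r)$ is either $x + \ulp(x)$ or, at a binade boundary, handled via Definition~\ref{def:pred-succ-fp}), $M_{succ(r)}$ is even, so $\ro(r) = pred(r) = x$. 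In the subcase $pred(r) = pred(x) < r < succ(r) = x$: here $M_{succ(r)} = M_x$ is odd by condition (i), so the second branch of Definition~\ref{def:rto} gives $\ro(r) = succ(r) = x$ directly.

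The main obstacle I anticipate is the parity bookkeeping at binade boundaries. When $x = \ufp(x)$ (a power of two) and $succ(r) = x$, or when $r$ straddles a boundary so that $succ(r)$ and $pred(r)$ lie in different binades, the simple ``alternating parity'' argument needs care because $\ulp$ changes across the boundary (see Definition~\ref{def:pred-succ-fp}). However, the paper has already established (in Section~\ref{sec:background}, Properties of Round-to-Odd) that for $p \ge 2$ every power of two $|x| = \ufp(x) > \omega$ has an \emph{even} significand; hence if $M_x$ is odd, $x$ is \emph{not} a power of two, which rules out $x$ itself being a binade boundary and keeps $x$, $succ(x)$, and $pred(x)$ in a configuration where the standard parity alternation applies. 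I would lean on this fact to keep the boundary analysis short, reducing the whole argument to: condition (i) plus the parity-alternation of significands forces exactly one of $\{pred(r), succ(r)\}$ to have an odd significand, and condition (ii) identifies that one as $x$.
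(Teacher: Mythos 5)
Your proof is correct and follows essentially the same route as the paper's: dispose of $r = x$, use condition (ii) to identify $\{pred(r), succ(r)\}$ as $\{pred(x), x\}$ or $\{x, succ(x)\}$, and apply Definition~\ref{def:rto} with condition (i). You are in fact more explicit than the paper about why $M_{succ(x)}$ is even in the $r > x$ subcase (the paper leaves this implicit); the only quibble is your side claim that $M_x$ odd rules out $x$ being a power of two, which fails for $|x| = \omega$, though the neighbors of $\pm\omega$ still have even significands so the conclusion is unaffected.
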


\begin{proof}
	We assume $r \neq x$, as the lemma is trivially true otherwise. We 
	decompose the proof into two cases: $r < x$ and $r > x$. If $r < x$, 
	Condition (ii) ensures $\fpred(x) < r < x$. Because no other numbers in 
	$\mathbb{F}$ exist between $\fpred(x)$ and $x$, $\fpred(x)$ and $x$ are 
	equal to $\fpred(r)$ and $\fsucc(r)$ respectively. Given Condition (i), $x 
	= \ro(r)$ immediately follows from Definition~\ref{def:rto}. Similarly, $r 
	> x$ implies $x = \fpred(r) < r < \fsucc(x) = \fsucc(r)$. Condition (i) and 
	Definition~\ref{def:rto} also induce $x = \ro(r)$ for such cases. 
\end{proof}

We apply Lemma~\ref{lemma:rto-2} to identify the rounding direction for $\ro$ sums when the addends' significands do not overlap (\ie, $\min(|a|, |b|) < \ulp(\max(|a|, |b|))$). Under such circumstances, the larger magnitude operand could be a floating-point neighbor of $a+b$. It would then follow that $r = a+b$ and $|x| = \max(|a|, |b|)$ satisfy Condition (ii). Based on these observations, we present a condition for $\delta \in \mathbb{F}$ that leverages the parity of the operands.

\begin{lemma}\label{lemma:rto-3}
Let $p \ge 2$, $a = M_{a} \times 2^{E_{a}-p+1} \in \mathbb{F} \setminus 
\{ \pm \infty \}$, $b = M_{b} \times 2^{E_{b}-p+1} \in \mathbb{F} \setminus 
\{ \pm \infty \}$, and $x = \ro(a+b)$. If $M_{\max(|a|,|b|)}$ is an odd 
integer, then $\delta = a+b - x \in \mathbb{F}$.
\end{lemma}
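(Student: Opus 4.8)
The plan is to peel off degenerate and out-of-range cases and then reduce to a single dichotomy, whose interesting branch uses Lemma~\ref{lemma:rto-2} to show that $\ro$ sends $a+b$ exactly onto the larger operand, so that $\delta$ is the smaller operand. First I would use symmetry: $\max(|a|,|b|)$, $x=\ro(a+b)$, and $\delta=a+b-x$ are all symmetric in $a,b$, and replacing $(a,b)$ by $(-a,-b)$ negates $\delta$ (as $\ro$ is odd) without affecting the hypothesis on $M_{\max(|a|,|b|)}$. So without loss of generality I take $a \ge |b| \ge 0$, whence $M_a = M_{\max(|a|,|b|)}$ is odd and $a+b \ge 0$; the cases $a=0$, $b=0$, $a+b=0$ each force $\delta=0$, so I may assume $a>0$, $b\ne 0$, $a+b>0$.

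Next I would dispose of the two regimes already covered by earlier results: if $|a+b|>\Omega$, Lemma~\ref{lemma:rto-1} gives $\delta\in\mathbb{F}$ directly; if $|a+b|<2^{E_{min}+1}$, then since $\ro\in\faith$, Corollary~\ref{corollary:faith-1} gives $\delta\in\mathbb{F}$ (indeed $\delta=0$). This leaves $2^{E_{min}+1}\le a+b\le\Omega$, where $a\ge(a+b)/2\ge 2^{E_{min}}$ makes $a$ normal; since $M_a$ is odd while $2^{E_{min}}$ has the even significand $2^{p-1}$ (using $p\ge 2$), in fact $a>2^{E_{min}}$ and $a$ is not a power of two, so Definition~\ref{def:pred-succ-fp} gives $pred(a)=a-\ulp(a)$ and $succ(a)=a+\ulp(a)$, reading $succ(\Omega)=\infty$ when $a=\Omega$.

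Then I would split on whether the operands' significands overlap. If $|b|<\ulp(a)$, then $pred(a)=a-\ulp(a)<a+b<a+\ulp(a)=succ(a)$ (strictly, since $-\ulp(a)<b\le|b|<\ulp(a)$; the right bound is immediate when $a=\Omega$), so Lemma~\ref{lemma:rto-2} applied with its ``$x$'' taken to be $a$ and its ``$r$'' taken to be $a+b$ --- whose hypotheses are exactly that $M_a$ is odd and $pred(a)<a+b<succ(a)$ --- yields $\ro(a+b)=a$, hence $\delta=a+b-a=b\in\mathbb{F}$. If instead $|b|\ge\ulp(a)=2u\cdot\ufp(a)$, I would invoke Lemma~\ref{lemma:faith-2}: condition (i) holds as $\ro\in\faith$, (ii) is the standing $a+b\le\Omega$, (iii) $|a|\ge u\cdot\ufp(b)$ follows from $|a|\ge|b|\ge\ufp(b)$, and (iv) $|b|\ge u\cdot\ufp(a)$ follows from $|b|\ge 2u\cdot\ufp(a)$; so $\delta\in\mathbb{F}$.

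The conceptual core is short --- an odd-significand larger operand is a floating-point neighbor of $a+b$ and is therefore the value $\ro$ selects --- so I expect the only delicate part to be the bookkeeping of boundary subcases (subnormal operands, overflow, $a$ a power of two or equal to $\Omega$) needed to justify the explicit forms of $pred(a)$ and $succ(a)$ and to route each input to exactly one of Lemma~\ref{lemma:rto-1}, Corollary~\ref{corollary:faith-1}, Lemma~\ref{lemma:rto-2}, or Lemma~\ref{lemma:faith-2}.
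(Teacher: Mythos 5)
Your proposal is correct and follows essentially the same route as the paper's proof: dispose of $|a+b|>\Omega$ via Lemma~\ref{lemma:rto-1}, then split on whether $|b|\ge\ulp(a)$ (reduce to a faithful-rounding sufficiency result) or $|b|<\ulp(a)$ (apply Lemma~\ref{lemma:rto-2} to get $x=a$ and $\delta=b$). The only cosmetic differences are that you invoke Lemma~\ref{lemma:faith-2} where the paper uses Theorem~\ref{theorem:faith-1} in the overlapping case, and you peel off the subnormal regime with Corollary~\ref{corollary:faith-1} where the paper instead observes directly that an odd $M_a$ forces $|a|\ne\ufp(a)$ or $|a|=\omega$; both handle the power-of-two and boundary subcases equally well.
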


\begin{proof}
	If $|a+b| > \Omega$, $\delta \in \mathbb{F}$ holds due to Lemma~\ref{lemma:rto-1}. We thus only consider cases where $|a+b| \le \Omega$. Without loss of generality, we assume $|a| \ge |b|$. Our assumptions thus imply $M_{\max(|a|,|b|)} = M_{|a|}$ and that $M_{a}$ is an odd integer. We proceed with the proof using two cases: $|b| \ge \ulp(a)$ and $|b| < \ulp(a)$.

	\textbf{Case 1: $\bm{|b| \ge \ulp(a)}$.} Since $\ulp(a)$ is an integer power of 2, $b$ satisfies $\ufp(b) \ge \ulp(a)$. Given $b \in u\cdot\ufp(b)\mathbb{Z}$, it follows that $b \in u\cdot\ulp(a)\mathbb{Z}$. Because $\ulp(a) \ge 2u \cdot \ufp(a)$, $b \in u\cdot\ulp(a)\mathbb{Z}$ induces $b \in 2u^{2}\cdot\ufp(a)\mathbb{Z}$. Similarly, $|a| \ge |b|$ guarantees $a \in 2u^{2}\cdot\ufp(b)\mathbb{Z}$. Therefore, $a$ and $b$ satisfy the conditions of Theorem~\ref{theorem:faith-1}, which confirms $\delta \in \mathbb{F}$.

	\textbf{Case 2: $\bm{|b| < \ulp(a)}$.} If $p \ge 2$, all integer powers of 
	2 in $\mathbb{F}$ greater than $\omega$ have even integral significands. 
	The same applies to their negatives. If $M_{a}$ is an odd integer, 
	either $|a| \neq \ufp(a)$ or $|a|= \omega$ must hold. Due to Definition~\ref{def:pred-succ-fp}, the distance between $a$ and its neighbors is thus $|
	\fpred(a) - a| = |\fsucc(a) - a| = \ulp(a)$. The assumption $|b| < \ulp(a)$ 
	would then indicate $\fpred(a) < a+b < \fsucc(a)$. $M_{a}$ being an odd 
	integer and $\fpred(a) < a+b < \fsucc(a)$ match the conditions of Lemma~\ref{lemma:rto-2}, which affirms $x = a$. If $x = a$, then $\delta = a+b - x = b$. 
	It immediately follows that $\delta$ is in $\mathbb{F}$.
\end{proof}

Provided that the operand with the larger magnitude has an odd significand, Lemma~\ref{lemma:rto-3} does not restrict the magnitude of the other operand. As such, the exponent difference of operands that adhere to Lemma~\ref{lemma:rto-3} could span the entire dynamic range of $\mathbb{F}$. The lemma thereby deviates from its counterparts in Section~\ref{sec:background} (see Lemma~\ref{lemma:faith-2} and Theorem~\ref{theorem:faith-1}), which entail bounds on the operands' relative magnitudes. Additionally, Lemma~\ref{lemma:rto-3} does not restrict the sign of the smaller magnitude operand as required for directed rounding modes.

Lemmas~\ref{lemma:rto-1} and~\ref{lemma:rto-3} each leverage $\ro$'s saturation 
and dependence on significand parity to ensure $\delta \in \mathbb{F}$. As a 
result, the lemmas provide guarantees for operands that may not satisfy the 
sufficient conditions for all faithful rounding modes. By applying Lemmas~\ref{lemma:rto-1} and~\ref{lemma:rto-3} in conjunction with exactness conditions 
for FastTwoSum's second operation (\ie, $z = \circ_{2}(x - a)$), we produce the 
following criteria that ensure FastTwoSum is an EFT under $\ro$.

\begin{theorem}\label{theorem:rto-1}
Let $p \ge 2$, $a = M_{a} \times 2^{E_{a}-p+1} \in \mathbb{F} \setminus 
\{ \pm \infty \}$, and $b = M_{b} \times 2^{E_{b}-p+1} \in \mathbb{F} \setminus 
\{ \pm \infty \}$. Let $x = \ro(a+b)$, $z = \circ_{2}(x - a)$, 
and $y = \circ_{3}(b - z)$. If all conditions

\begin{itemize}
	\item[]{(i)} $\circ_{2}, \circ_{3} \in \faith$
	\item[]{(ii)} $a \in \ulp(b)\mathbb{Z}$
	\item[]{(iii)} $M_a$ is an odd integer
\end{itemize}
are satisfied, then $x+y = a+b$.
\end{theorem}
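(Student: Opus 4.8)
The plan is to establish the two properties from Section~\ref{sec:background} whose conjunction forces $x+y=a+b$: Property~1 ($x-a\in\mathbb{F}$, so that $z=\circ_2(x-a)=x-a$) and Property~2 ($\delta=a+b-x\in\mathbb{F}$). Condition~(ii), $a\in\ulp(b)\mathbb{Z}$, is exactly the hypothesis the paper already adopts (following Jeannerod~\etal) as sufficient for $z=x-a$ under arbitrary $\circ_1,\circ_2\in\faith$; since $\circ_1=\ro\in\faith$, Property~1 is discharged at once and $z=x-a$. Substituting, $y=\circ_3(b-z)=\circ_3(a+b-x)=\circ_3(\delta)$, so once $\delta\in\mathbb{F}$ is shown we obtain $y=\delta$ for every $\circ_3\in\faith$ and hence $x+y=x+(a+b-x)=a+b$. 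The whole argument thus reduces to proving $\delta\in\mathbb{F}$, where Conditions~(ii) and~(iii) get combined.

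To prove $\delta\in\mathbb{F}$ I would first note that Condition~(iii) forces $a\neq 0$ (the zero significand is even), and then split on the relative magnitudes of the operands. \textbf{Case $|a|\ge|b|$:} here $\text{max}(|a|,|b|)=|a|$ and $M_{|a|}=|M_a|$ is odd by Condition~(iii), so Lemma~\ref{lemma:rto-3} gives $\delta\in\mathbb{F}$ directly; this is the regime the parity hypothesis is designed for, where either $\ro$ returns $x=a$ because $a+b$ lies between $a$'s floating-point neighbors (Lemma~\ref{lemma:rto-2}) or the significand-overlap sub-case reduces to Theorem~\ref{theorem:faith-1}. \textbf{Case $|b|>|a|$:} here Condition~(iii) plays no role. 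From Condition~(ii) and $\ulp(b)\ge 2u^2\cdot\ufp(b)$ we get $a\in 2u^2\cdot\ufp(b)\mathbb{Z}$; from $|a|<|b|$ we have $\ufp(a)\le\ufp(b)$, hence $2u^2\cdot\ufp(a)\le 2u\cdot\ufp(b)$, and since $b\in 2u\cdot\ufp(b)\mathbb{Z}$ (as $b\in\mathbb{F}$) with both quantities being integer powers of two, $b\in 2u^2\cdot\ufp(a)\mathbb{Z}$. These are exactly the hypotheses of Corollary~\ref{corollary:rto-1}, which yields $\delta\in\mathbb{F}$ with no restriction on $|a+b|$ (it already folds in the saturating case handled by Lemma~\ref{lemma:rto-1}). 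In either case $\delta\in\mathbb{F}$, and the reduction above finishes the proof.

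The point requiring care, rather than any genuine difficulty, is recognizing this division of labor: the parity of $M_a$ matters only when $a$ is the operand of larger magnitude, because that is the configuration in which $a+b$ can land between the floating-point neighbors of an odd-significand number (the mechanism behind Lemmas~\ref{lemma:rto-2} and~\ref{lemma:rto-3}); in the opposite regime Condition~(ii) alone pins $a$ to a grid fine enough for the faithful-rounding analysis of Section~\ref{sec:faithful}, and $b$'s own representability supplies the rest, so no appeal to significand parity is needed. A final sanity check covers the degenerate inputs: $a=0$ is excluded by Condition~(iii), while $b=0$ and $a+b=0$ both fall into the case $|a|\ge|b|$ and are absorbed by Lemma~\ref{lemma:rto-3} (with $\delta=b$ or $\delta=0$).
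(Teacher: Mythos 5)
Your proposal is correct and follows essentially the same route as the paper's proof: discharge Property~1 via Condition~(ii), then split on $|a|\ge|b|$ (apply Lemma~\ref{lemma:rto-3} using the parity of $M_a$) versus $|a|<|b|$ (derive $a\in 2u^{2}\cdot\ufp(b)\mathbb{Z}$ and $b\in 2u^{2}\cdot\ufp(a)\mathbb{Z}$ and apply Corollary~\ref{corollary:rto-1}). The added remarks on degenerate inputs and on why parity only matters in the first case are sound but not a different argument.
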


\begin{proof}

Condition (ii) ensures $x - a \in \mathbb{F}$, which then guarantees 
$z = x - a$ for all $\circ_{2} \in \faith$. It therefore suffices to prove 
$\delta = a+b - x \in \mathbb{F}$ as it guarantees $y = a+b - x$ for all 
$\circ_{3} \in \faith$. We thus prove $\delta$ is in $\mathbb{F}$ for all 
operands that meet Conditions (ii) and (iii) by separately analyzing the cases 
$|a| \ge |b|$ and $|a| < |b|$.

	\textbf{Case 1: $\bm{|a| \ge |b|}$.} If $|a| \ge |b|$, Condition (iii) implies $\max(|a|, |b|)$ has an odd significand. The operands thus satisfy the condition in Lemma~\ref{lemma:rto-3}, which affirms $\delta \in \mathbb{F}$.

	\textbf{Case 2: $\bm{|a| < |b|}$.} The assumption $|a| < |b|$ indicates $\ufp(b) \ge \ufp(a)$. From $\ufp(b) \ge \ufp(a)$, we derive $b \in 2u^{2}\cdot\ufp(a)\mathbb{Z}$. Because $\ulp(b) \ge 2u^{2}\cdot\ufp(b)$, Condition (ii) induces $a \in 2u^{2}\cdot\ufp(b)\mathbb{Z}$. The operands that satisfy both $|a| < |b|$ and Condition (ii) thus meet the conditions in Corollary~\ref{corollary:rto-1}, which confirms $\delta \in \mathbb{F}$.
\end{proof}

Theorem~\ref{theorem:rto-1} enables for $\ro$ the use of FastTwoSum on operands 
such that an EFT is not guaranteed under directed rounding modes. For example, 
suppose $a = 2^{p} + 2$ and $b = -2^{-p}$. In this example, $a$ is an integer 
multiple of $\ulp(b) = 2^{1-2p}$ and has an odd significand (\ie, $a = 2^{p} + 
2 = (2^{p-1} + 1) \times 2$). The operands therefore satisfy the conditions in 
Theorem~\ref{theorem:rto-1}. However, $b$ is not an integer multiple of $2u^{2} 
\cdot \ufp(a) = 2^{1-p}$ as required by Theorem~\ref{theorem:faith-2}. If 
$\circ_{1} = \circ_{2} = \circ_{3} = \rz$, $x = \rz(a+b) = 2^{p}$ and $z = \rz
(x - a) = -2$. As a result, $y = \rz(b - z) = 2 - 2^{1-p}$, and $x + y \neq a + 
b$. Under $\ro$, FastTwoSum produces $x = 2^{p} + 2$, $z = 0$, and 
$y = -2^{-p}$. Hence, FastTwoSum under $\ro$ achieves $x+y = a+b$ as guaranteed 
by Theorem~\ref{theorem:rto-1}.

\section{Applications of Round to Odd FastTwoSum}\label{sec:applications}

A common application of FastTwoSum is floating-point splitting: an EFT that splits $x \in \mathbb{F}$ across two numbers $x_{h}, x_{\ell} \in \mathbb{F}$ such that $x = x_{h} + x_{\ell}$. Dekker's splitting~\cite{dekker:fts:1971}, for example, splits a $p$-precision number across two numbers that each have up to $\lfloor \frac{p}{2} \rfloor$ effective bits of precision. In~\cite{rump:accsum-1:siam:2008}, the authors present a more general splitting known as ExtractScalar, for which the distribution of bits is configurable.

\begin{empheq}[box=\fbox]{align*}
\textbf{Ex}&\textbf{tractScalar}(\sigma, x):\\
&s = \circ_{1}(\sigma + x)\\
&x_{h} = \circ_{2}(s - \sigma)\\
&x_{\ell} = \circ_{3}(x - x_{h})\\
&\textbf{return } x_{h}, x_{\ell}
\end{empheq}

ExtractScalar applies FastTwoSum on $x, \sigma \in \mathbb{F}$ such that 
$\sigma = 2^{k}$ for some $k \in \mathbb{Z}$ (\ie, $\sigma$ is an integer power 
of 2). It is shown in~\cite{rump:accsum-1:siam:2008} that if $|x| \le 2^{k}$ 
and $\circ_{1}, \circ_{2}, \circ_{3} \in \rn$, then $x_{h} \in \frac{1}{2}\ulp
(\sigma)\mathbb{Z}$ and $x = x_{h} + x_{\ell}$. If $x_{h} \in \frac{1}{2}\ulp
(\sigma)\mathbb{Z}$, the number of effective bits in $x_{h}$ is determined by 
the exponent difference between $\sigma$ and $x$. Hence, ExtractScalar can 
adjustably distribute bits across $x_{h}$ and $x_{\ell}$ by assigning an 
appropriate value of $\sigma$ relative to $x$. Given a vector $v \in 
\mathbb{F}^{n}$, ExtractScalar can also produce an EFT for \textit{all 
elements} $v_{i}$ if $\sigma$ is configured relative to $\max_{i}|v_{i}|$. 

In~\cite{rump:accsum-1:siam:2008}, ExtractScalar performs all operations under $\rn$ while assuming overflow does not occur. Furthermore, the assumptions $\sigma = 2^{k}$ and $|x| \le 2^{k}$ do not immediately satisfy the EFT conditions for FastTwoSum detailed in Section~\ref{sec:faithful}. As such, ExtractScalar may not be an EFT when $\circ_{1} \not \in \rn$. Specifically, an EFT is not guaranteed when the exponent difference between $\sigma$ and $x$ exceeds $2p - 1$. Suppose $\circ_{1}, \circ_{2}, \circ_{3} = \ro$ and $x = 2^{k-2p}$. In this example, $s = \ro(\sigma + x) = 2^{k} + 2^{k-p+1}$ and $x_{h} = \ro(2^{k} + 2^{k-p+1} - 2^{k}) = 2^{k-p+1}$. Subsequently, $x_{\ell} = \ro(2^{k-2p} - 2^{k-p+1}) = 2^{k-2p+1} -2^{k-p+1}$ and $x \neq x_{h} + x_{\ell}$. Similar examples can be constructed for directed rounding modes 
(\eg, $x = 2^{k-2p}$ for $\ru$ and $x = -2^{k-2p-1}$ for $\rd$ and $\rz$). Given these limitations, we present new conditions that ensure ExtractScalar is an EFT under $\ro$.

\begin{theorem}\label{theorem:extract-scalar}
Let $p \ge 2$, $x, \sigma \in \mathbb{F} \setminus \{\pm \infty\}$, and $k \in 
\mathbb{Z}$. Let $s = \ro(\sigma+x)$, $x_{h} = \circ_{2}(s - \sigma)$, and 
$x_{\ell} = \circ_{3}(x - x_{h})$. If all conditions

\begin{itemize}
	\item[]{(i)} $\circ_{2}, \circ_{3} \in \faith$
    \item[]{(ii)} $\sigma = 2^{k} + \ulp(2^{k})$ and $2^{k} \ge 2\omega$ 
	\item[]{(iii)} $|x| \le 2^{k}$
\end{itemize}
are satisfied, then $x_{h} \in \frac{1}{2}\ulp(\sigma)\mathbb{Z}$ and $x = x_{h} + x_{\ell}$.
\end{theorem}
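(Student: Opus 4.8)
The plan is to recognize ExtractScalar as FastTwoSum applied to the pair $(\sigma,x)$: the intermediate $s=\circ_1(\sigma+x)$ plays the role of FastTwoSum's floating-point sum, while $x_h=\circ_2(s-\sigma)$ and $x_\ell=\circ_3(x-x_h)$ are its two error terms (note the input being split is $x$ and the shift is $\sigma$, so FastTwoSum's ``sum'' is $s$). Under this identification the desired equality $x=x_h+x_\ell$ is exactly the EFT guarantee, so the bulk of the proof will be an application of Theorem~\ref{theorem:rto-1} with $a=\sigma$ and $b=x$, followed by a separate argument for the bit-distribution claim $x_h\in\frac{1}{2}\ulp(\sigma)\mathbb{Z}$.

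First I would record the structural consequences of Condition (ii). Since $\ulp(2^k)\le\max(2u\cdot 2^k,\omega)\le 2^{k-1}$ for $p\ge 2$ --- the bound $2^k\ge 2\omega$ is what controls the subnormal range --- we get $2^k\le\sigma<2^{k+1}$, hence $\ufp(\sigma)=2^k$ and $\ulp(\sigma)=\ulp(2^k)$. Writing $\sigma=M_\sigma\times 2^{E_\sigma-p+1}$ gives $M_\sigma=2^{p-1}+1$ when $\sigma$ is normal and $M_\sigma=2^k/\omega+1$ when $\sigma$ is subnormal; $M_\sigma$ is odd in both cases, and in the subnormal case this is precisely where $2^k\ge 2\omega$ (i.e.\ $2^k/\omega$ even) is needed. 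Next, Condition (iii) gives $|x|\le 2^k$, hence $\ufp(x)\le 2^k$ and $\ulp(x)\le\ulp(2^k)=\ulp(\sigma)$ by monotonicity of $\ulp$; since $\sigma\in\ulp(\sigma)\mathbb{Z}$ and $\ulp(\sigma)$ is an integer multiple of $\ulp(x)$, this yields $\sigma\in\ulp(x)\mathbb{Z}$. Condition (i), together with $\sigma\in\ulp(x)\mathbb{Z}$ and $M_\sigma$ odd, are exactly the hypotheses of Theorem~\ref{theorem:rto-1} for the pair $(\sigma,x)$, so $s+x_\ell=\sigma+x$. Moreover the Property 1 guarantee of Section~\ref{sec:background}, which $\sigma\in\ulp(x)\mathbb{Z}$ supplies, gives $x_h=s-\sigma$ exactly; substituting $s=\sigma+x_h$ into $s+x_\ell=\sigma+x$ delivers $x=x_h+x_\ell$.

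It remains to prove $x_h\in\frac{1}{2}\ulp(\sigma)\mathbb{Z}$. Because $\frac{1}{2}\ulp(\sigma)=u\cdot\ufp(\sigma)=2^{k-p}$ and $\sigma\in\ulp(\sigma)\mathbb{Z}\subseteq 2^{k-p}\mathbb{Z}$, and $x_h=s-\sigma$, it suffices to show $s\in 2^{k-p}\mathbb{Z}$. Note $\sigma+x\ge\sigma-|x|\ge\sigma-2^k=\ulp(2^k)>0$, so $\sigma+x$ is always positive, and I would case-split on its size. If $\sigma+x>\Omega$, round-to-odd saturates and $s=\Omega$, which lies in $2^{k-p}\mathbb{Z}$ because $\sigma\le\Omega$ forces $k\le E_{max}$. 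If $2^{k-1}\le\sigma+x\le\Omega$, then $\ufp(s)=\ufp(\sigma+x)\ge 2^{k-1}$ since round-to-odd does not renormalize in the dynamic range, so $\ulp(s)$ is an integer multiple of $2^{k-p}$ and hence so is $s$. If $0<\sigma+x<2^{k-1}$, then $x<2^{k-1}-\sigma<-2^{k-1}$, so $\ufp(x)\ge 2^{k-1}$ and therefore $x\in\ulp(x)\mathbb{Z}\subseteq 2^{k-p}\mathbb{Z}$; combined with $\sigma\in 2^{k-p}\mathbb{Z}$ this gives $\sigma+x\in 2^{k-p}\mathbb{Z}$, and Lemma~\ref{lemma:fp} (taking its power of two to be $2^{k-1}$) shows $\sigma+x\in\mathbb{F}$, so $s=\ro(\sigma+x)=\sigma+x\in 2^{k-p}\mathbb{Z}$.

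I expect the last part --- establishing $s\in\frac{1}{2}\ulp(\sigma)\mathbb{Z}$ --- to be the main obstacle. The difficulty is that $x_h$ can be far smaller in magnitude than $\ufp(\sigma)$, so one cannot simply bound $\ulp(x_h)$ from below and must instead case-split on where $\sigma+x$ falls relative to $2^{k-1}$. The rest is bookkeeping in the subnormal regime ($2^k<2^{E_{min}}$), where one repeatedly uses that $\omega$ is an integer multiple of $2^{k-p}$ under the hypothesis $2\omega\le 2^k$, and at the saturation boundary, where one uses $\Omega\in 2^{k-p}\mathbb{Z}$ together with a Sterbenz-type argument ($\ufp(\sigma)=2^{E_{max}}$ in that regime) to keep $x_h=s-\sigma$ exact.
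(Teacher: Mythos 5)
Your proposal is correct and follows essentially the same route as the paper: establish $\ufp(\sigma)=2^k$ and the oddness of $M_\sigma$ from Condition (ii), prove $s\in\frac{1}{2}\ulp(\sigma)\mathbb{Z}$ by a case analysis that is a direct reindexing of the paper's (your split on where $\sigma+x$ lies relative to $\Omega$ and $2^{k-1}$ corresponds one-to-one with the paper's split on the sign of $x$ and on $x\lessgtr -2^{k-1}$), and then invoke Theorem~\ref{theorem:rto-1} with $a=\sigma$, $b=x$ together with the exactness of $x_h=s-\sigma$ to get $x=x_h+x_\ell$. The only cosmetic difference is that you resolve the small-sum case via Lemma~\ref{lemma:fp} (showing $\sigma+x\in\mathbb{F}$ outright) where the paper argues via $\ulp(x)\ge\frac{1}{2}\ulp(\sigma)$; both are valid, and your explicit treatment of the subnormal bookkeeping for $\frac{1}{2}\ulp(\sigma)$ is consistent with the paper's claims.
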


\begin{proof}
    Given $p \ge 2$ and Condition (ii), we derive $\ufp(\sigma) = 2^{k}$ and 
    $\ulp(\sigma) = \ulp(2^{k})$. Through Conditions (ii) and (iii), we also 
    derive $|x| \le \ufp(\sigma) < \sigma$. We first prove $x_{h} \in 
    \frac{1}{2}\ulp(\sigma)\mathbb{Z}$. If $s, \sigma \in \frac{1}{2}\ulp
    (\sigma)\mathbb{Z}$, it must be the case that $x_{h} = \circ_{2}(s - 
    \sigma) \in \frac{1}{2}\ulp(\sigma)\mathbb{Z}$. As $\sigma \in \frac{1}{2}
    \ulp(\sigma)\mathbb{Z}$ by default, it suffices to show $s \in \frac{1}{2}
    \ulp(\sigma)\mathbb{Z}$. We split our proof into three cases: $x \ge 0$, 
    $-2^{k-1} \le x < 0$, and $x < -2^{k-1}$. 
    
    We further decompose $x \ge 0$ into two cases: $\sigma + x > \Omega$ and $\sigma + x \le \Omega$. If $\sigma + x > \Omega$, then $s = \ro(\sigma + x) = \Omega = (2^{p} - 1) \times 2^{E_{max} -p + 1} $ due to $\ro$'s saturation property. Because $|x| \le \ufp(\sigma)$, $\sigma + x > \Omega$ implies $\ufp(\sigma) = 2^{E_{max}}$ and $\ulp(\sigma) = 2^{E_{max}-p+1}$. It then follows that $\ulp(s) = \ulp(\sigma)$. Since $s \in \ulp(s)\mathbb{Z}$ and $\ulp(s) = \ulp(\sigma) > \frac{1}{2}\ulp(\sigma)$, we conclude $s \in \frac{1}{2}\ulp(\sigma)\mathbb{Z}$ when $\sigma + x > \Omega$. If $\sigma + x \le \Omega$, $\ufp(s) = \ufp(\sigma + x)$ (\ie, $e_{s} = e_{\sigma+x}$) for $s = \ro(\sigma + x)$. Furthermore, $x \ge 0$ indicates $\sigma + x \ge \sigma$ and $\ufp(\sigma + x) \ge \ufp(\sigma)$. Given $\ufp(s) = \ufp(\sigma + x) \ge \ufp(\sigma)$, Definition~\ref{def:ufp-ulp} ensures $\ulp(s) = \ulp(\sigma + x) \ge \ulp(\sigma) > \frac{1}{2}\ulp(\sigma)$. As $s \in \ulp(s)\mathbb{Z}$, $s$ thus satisfies $s \in \frac{1}{2}\ulp(\sigma)\mathbb{Z}$ when $x + \sigma \le \Omega$. 
    
    If $-2^{k-1} \le x < 0$ and $s = \ro(\sigma + x)$, then $\sigma + x \ge 2^{k-1} + \ulp(2^{k})$ and $\ufp(s) = \ufp(\sigma + x) \ge 2^{k-1} = \frac{1}{2}\ufp(\sigma)$. As such, $\ulp(s) \ge \frac{1}{2}\ulp(\sigma)$ and $s \in \frac{1}{2}\ulp(\sigma)\mathbb{Z}$. Lastly, $x < -2^{k-1}$ implies $\ufp(x) \ge 2^{k-1} = \frac{1}{2}\ufp(\sigma)$ and $\ulp(x) \ge \frac{1}{2}\ulp(\sigma)$. Since $s = \ro(\sigma + x)$ and $\sigma + x, s \in \min(\ulp(\sigma), \ulp(x))\mathbb{Z}$, it follows that $s \in \frac{1}{2}\ulp(\sigma)\mathbb{Z}$. Hence, $x_{h} \in \frac{1}{2}\ulp(\sigma)\mathbb{Z}$ for all cases considered.
    
    Due to Definition~\ref{def:pred-succ-fp}, $\sigma = 2^{k} + \ulp(2^{k})$ in Condition (ii) is equal to $\fsucc(2^{k})$. All integer powers of 2 in $\mathbb{F}$ greater than $\omega$ have even significands. Consequently, any $\sigma$ that satisfies Condition (ii) must have an odd significand. Because $|x| < \sigma$, $\sigma \in \ulp(x)\mathbb{Z}$. All $\sigma, x \in \mathbb{F}$ that satisfy these conditions thus meet the requirements of Theorem~\ref{theorem:rto-1}. Through Theorem~\ref{theorem:rto-1}, we derive $\sigma + x = s + x_{\ell}$. Because $\sigma \in \ulp(x)\mathbb{Z}$ ensures $x_{h} = s - \sigma$ (see Section~\ref{sec:background}), $\sigma + x = s + x_{\ell}$ induces $x = s - \sigma + x_{\ell} = x_{h} + x_{\ell}$ as was to be shown.
\end{proof}

If the requirements of Theorem~\ref{theorem:extract-scalar} are met, 
ExtractScalar ensures error-free splitting under $\ro$ even when $\sigma$ and 
$x$ are non-overlapping (\ie, $\ulp(\sigma) > |x|$). Given a $\sigma$ that 
satisfies the conditions in Theorem~\ref{theorem:extract-scalar} for 
$x = \max_{i}|v_{i}|$ in a vector $v \in \mathbb{F}^{n}$, ExtractScalar can 
thus produce an EFT for all elements $v_{i}$ under $\ro$ without requiring a 
lower bound on $\min_{i}|v_{i}|$.

\section{Related Work}
\label{sec:related}

In this section, we discuss prior work in the literature that have
also explored EFT guarantees for FastTwoSum under various faithful 
rounding modes. Jeannerod and Zimmermann propose $a \in \ulp(b)\mathbb{Z}$ and 
$e_{a} - e_{b} \le p$ as sufficient conditions for ensuring FastTwoSum is an 
EFT under all faithful rounding modes~\cite[Theorem 2]
{jeannerod:fts-revisited:arith:2025}. Operands that satisfy 
$e_{a} - e_{b} \le p$ must adhere to the condition $b \in
2u^{2} \cdot \ufp(a)\mathbb{Z}$. Accordingly, Theorem~\ref{theorem:faith-2} 
of this paper ensures $x+y = a + b$ for any operands that meet the conditions
from~\cite[Theorem 2]{jeannerod:fts-revisited:arith:2025}. 
Theorem~\ref{theorem:faith-2} of this paper also guarantees EFTs for operands 
that would be excluded by the bound $e_{a} - e_{b} \le p$ (\eg, $a = 2^{p}$ and 
$b = 2^{-1}$). As such, our conditions are applicable to a larger set of 
operands than those addressed by previous works.

In~\cite[Theorems 9, 10, 11]{jeannerod:fts-revisited:arith:2025}, Jeannerod and 
Zimmermann establish that when $a \in \ulp(b)\mathbb{Z}$ but $e_{a} - e_{b} 
> p$, FastTwoSum under directed rounding modes produces a faithful rounding of 
$a+b$ for a format with twice the current available precision 
(\eg, $x+y = \rz_{2p}(a+b)$). These theorems signify that FastTwoSum under 
directed rounding modes is an EFT if $a \in \ulp(b)\mathbb{Z}$ and the real sum 
is exactly representable in $2p$-bits of precision. Our conditions in 
Theorem~\ref{theorem:faith-2} ensure that $a+b$ is exactly representable in 
$2p$-bits and thus guarantee EFTs for directed rounding modes under the same 
circumstances as~\cite[Theorems 9, 10, 11]{jeannerod:fts-revisited:arith:2025}.

For operands that adhere to Theorem~\ref{theorem:faith-1}, the
exponent difference could be as large as $2p - 1$. Conditions with
similar implications appear in~\cite{linnainmaa:fp-analysis:1974}. 
In~\cite{linnainmaa:fp-analysis:1974}, Linnainmaa presents conditions for 
$\delta \in \mathbb{F}$ under \rz and \ro using the notation $h_{y}$: the 
exponent of the \textit{least significant nonzero bit} of $y \in \mathbb{F} \setminus\{0\}$. 
Alternatively, $h_{y}$ is the \textit{largest} integer $k$
such that $y \in 2^{k}\mathbb{Z}$. Given this definition, all finite,
nonzero $y \in \mathbb{F}$ satisfy $y \in 2^{h_{y}}\mathbb{Z}$ and
$\ulp(y) \le 2^{h_y} \le \ufp(y)$. Moreover, if $|y| = \ufp(y)$ (\ie,
$|y|$ is an integer power of 2), then $h_{y} = e_{y}$. Linnainmaa
proves that if $x = \circ(a+b)$ for $\circ \in \{\rz, \ro\}$, $e_{x} -
h_{\min(|a|, |b|)} < 2p$ guarantees $\delta = a+b - x \in
\mathbb{F}$. If $\ulp(a) > |b|$ (\ie, $|a| > |b|$ and the operands' 
significands do not overlap) and $a \times b > 0$, $e_{x} = e_{a}$ must hold 
under $\circ \in \{\rz, \ro \}$ due to Definitions~\ref{def:ufp-ulp} and
~\ref{def:pred-succ-fp}. The condition $e_{x} - h_{\min(|a|,
  |b|)} < 2p$ for such cases is thus equivalent to $e_{a} - h_{|b|} <
2p$. Since $h_{|b|} = h_{b} = e_{b}$ when $|b| = \ufp(b)$, the example
indicates that Linnainmaa's condition can tolerate up to a $2p - 1$
difference in exponents.

While Linnainmaa's condition provides comparable flexibility under
$\rz$ and $\ro$, it lacks precision under $\rd$ and $\ru$. For
example, consider operands $a = 2^{p} - 1$ and $b = 2^{-p}$. Given
$|a| \ge |b|$ and $2u^{2} \cdot \ufp(a) = 2^{-p}$, the operands
satisfy $a \in 2u^{2}\cdot \ufp(b)\mathbb{Z}$ and $b \in 2u^{2}\cdot
\ufp(a)\mathbb{Z}$ as required by Theorem~\ref{theorem:faith-1}. If
$\circ = \ru$, $x = \circ(a+b) = 2^{p}$ and $\delta = 2^{-p} - 1 \in
\mathbb{F}$. In this example, $e_{x} = p$, $h_{\min(|a|,|b|)} =
-p$, and $e_{x} - h_{\min(|a|, |b|)} = 2p$. As such, the example
operands do not satisfy Linnainmaa's condition despite $\delta$ being
representable in $\mathbb{F}$. One can construct a similar example for
$\rd$ by reversing the signs of $a$ and $b$. Hence,
Theorem~\ref{theorem:faith-1} is applicable to a broader range of
operands under $\rd$ and $\ru$ while addressing all faithful rounding
modes including $\rz$ and $\ro$.

Akin to Lemma~\ref{lemma:rto-3}, Linnainmaa proposes
$M_{\max(|a|, |b|)}$ having an odd significand as a sufficient
condition for $\delta \in \mathbb{F}$ under $\ro$. However,
Linnainmaa's analysis assumes an unbounded model of $\mathbb{F}$ and
does not address guarantees due to $\ro$'s saturation property (see
Lemma~\ref{lemma:rto-1}). Furthermore, Linnainmaa proposes $|a| \ge
|b|$ and $a$ having an odd significand as EFT guarantees for
FastTwoSum under $\ro$. Instead of $|a| \ge |b|$,
Theorem~\ref{theorem:rto-1} enforces the less restrictive $a \in
\ulp(b)\mathbb{Z}$ to ensure the second operation of FastTwoSum is
exact.

\section{Conclusion and Future Work}
\label{sec:conclusion}

This paper identifies more general conditions than previously known in
literature that ensure FastTwoSum is an EFT for all faithful rounding
modes. Specifically, we identify new properties of operands that
ensure the rounding error $\delta = a+b - \circ(a+b)$ is in
$\mathbb{F}$. Our conditions enable EFTs for operands with large magnitude
differences for all faithful rounding modes (even when the operands'
exponent difference is nearly double the available precision). Hence,
our EFT guarantees for FastTwoSum are applicable to a
wide range of inputs, thereby offering improved precision over
previously established conditions.

This paper also presents EFT conditions tailored to $\ro$. We leverage
$\ro$'s parity-based rounding behavior to identify sufficient
requirements for both $\delta \in \mathbb{F}$ and $x+y = a+b$. We
observe that when the larger operand has an odd significand, 
FastTwoSum can serve as an EFT under $\ro$ without restricting 
the magnitude or the sign of the smaller operand, which
highlights $\ro$'s versatility over directed rounding modes. By
analyzing a bounded floating-point model, we also examine $\ro$'s
saturation property and its enabling of EFTs. We leverage these
findings to identify conditions for error-free floating-point
splittings under $\ro$.

FastTwoSum-based splitting facilitates more sophisticated EFTs such as
integer rounding~\cite{jeannerod:fp-split:arith:2018}, correctly
rounded
summation~\cite{rump:accsum-1:siam:2008}~\cite{rump:accsum-2:siam:2009},
and accurate dot-products~\cite{ozaki:ozaki-scheme:2012}. To leverage
EFT guarantees, most algorithms based on FastTwoSum require $\rn$ as
the default rounding mode. We expect our findings to aid the
development of advanced EFTs for other standard rounding modes as well
as emerging alternatives such as $\ro$. Applying our conditions to
EFTs for floating-point multiplication is also of interest. In a
similar vein, extending our observations on rounding errors induced by
floating-point addition to fused multiply-add operations is a
potential future direction.

\newpage

\begin{acks}
This material is based upon work supported in part by the research
gifts from the Intel corporation and the
\grantsponsor{GS100000001}{National Science
  Foundation}{http://dx.doi.org/10.13039/100000001} with grants:
\grantnum{GS100000001}{2110861} and \grantnum{GS100000001}{2312220}.
  Any opinions, findings, and conclusions or recommendations expressed
  in this material are those of the authors and do not necessarily
  reflect the views of the Intel corporation or the National Science
  Foundation.
\end{acks}

\renewcommand\refname{References}
\bibliography{reference}

\end{document}